\documentclass[10pt,journal,compsoc]{IEEEtran}
\IEEEoverridecommandlockouts

\usepackage{amsmath,amssymb,amsfonts,amsthm}
\usepackage[linesnumbered,ruled,vlined]{algorithm2e}
\SetArgSty{textup}
\usepackage{array}
\usepackage{bbding}
\usepackage{bm}
\usepackage{booktabs}
\usepackage{enumitem}
\usepackage{extarrows}
\usepackage{graphicx}
\usepackage{hyperref}
\usepackage{listing}
\usepackage{makecell}
\usepackage{multirow}
\usepackage{pifont}
\usepackage{subcaption}
\usepackage{threeparttable}
\usepackage[most]{tcolorbox}

\newtcolorbox{mybox}[2][]{%
	attach boxed title to top left
	= {xshift=10pt, yshift=-10pt},
	colframe     = black,
	colback      = white,
	coltitle     = black,
	colbacktitle = white,
	title        = #2,#1,
	enhanced,
	left=2pt,
	top=10pt,
}
\usepackage{textcomp}
\newtheorem{theorem}{Theorem}
\newtheorem{lemma}{Lemma}
\usepackage{tikz}
\usetikzlibrary{positioning}
\usepackage[normalem]{ulem}
\usepackage{xcolor}

\newcommand{\na}{N/A}
\newcommand{\xmark}{\ding{55}}%

\newcommand{\role}[1]{$\triangleright$ {\color{blue}#1}}

\newtheorem{problem}{Question}

%
\ifCLASSOPTIONcompsoc
\usepackage{cite}
\else
\usepackage{cite}
\fi

\ifCLASSINFOpdf

\else

\fi

\hyphenation{op-tical net-works semi-conduc-tor}

\makeatletter
\def\ps@IEEEtitlepagestyle{%
  \def\@oddfoot{\mycopyrightnotice}%
  \def\@evenfoot{}%
}
\def\mycopyrightnotice{%
  {\footnotesize 
  \begin{minipage}{\textwidth}
  \centering
  \textbf{This work has been submitted to the IEEE for possible publication. Copyright may be transferred without notice, after which this version may no longer be accessible.}
  \end{minipage}
    \hfill}
  \gdef\mycopyrightnotice{}
}

\begin{document}
	
	\title{BFT-DSN: A Byzantine Fault Tolerant Decentralized Storage Network}

	\author{Hechuan Guo,
		Minghui Xu,
		Jiahao Zhang,
            Chunchi Liu,
            Rajiv Ranjan,
		Dongxiao Yu,
            Xiuzhen Cheng
    
		\IEEEcompsocitemizethanks{\IEEEcompsocthanksitem H. Guo, M. Xu, J. Zhang, D. Yu and X. Cheng are with the School of Computer and Science and Technology, Shandong University. Email: \{ghc, zjh\}@mail.sdu.edu.cn, \{mhxu, dxyu, xzcheng\}@sdu.edu.cn\\

        \IEEEcompsocthanksitem C. Liu is with the Huawei Technologies Co., Ltd. E-mail: liuchunchi@huawei.com\\

        \IEEEcompsocthanksitem R. Ranjan is with the School of Computing, Newcastle University, Newcastle, United Kingdom. E-mail: raj.ranjan@newcastle.ac.uk\\

        \IEEEcompsocthanksitem Corresponding author: Minghui Xu.
            }
	}
	
	
	\IEEEtitleabstractindextext{%
		\begin{abstract}
            With the rapid development of blockchain and its applications, the amount of data stored on decentralized storage networks (DSNs) has grown exponentially. DSNs bring together affordable storage resources from around the world to provide robust, decentralized storage services for tens of thousands of decentralized applications (dApps). However, existing DSNs do not offer verifiability when implementing erasure coding for redundant storage, making them vulnerable to Byzantine encoders. Additionally, there is a lack of Byzantine fault-tolerant consensus for optimal resilience in DSNs. This paper introduces BFT-DSN, a Byzantine fault-tolerant decentralized storage network designed to address these challenges. BFT-DSN combines storage-weighted BFT consensus with erasure coding and incorporates homomorphic fingerprints and weighted threshold signatures for decentralized verification. The implementation of BFT-DSN demonstrates its comparable performance in terms of storage cost and latency as well as superior performance in Byzantine resilience when compared to existing industrial decentralized storage networks. 
            \end{abstract}
		
		\begin{IEEEkeywords}
			Decentralized storage networks, consortium blockchain, Byzantine fault tolerance, erasure codes
	\end{IEEEkeywords}}
	
	\maketitle
	
	\IEEEdisplaynontitleabstractindextext
	
	\IEEEpeerreviewmaketitle

\section{Introduction}\label{sec:introduction}

Decentralized storage networks (DSNs) have been developed with the major goal of establishing robust management of storage security within a zero trust environment. These networks make use of underutilized storage resources, strategically distributed across various devices and locations, to optimize storage efficiency. Within a DSN, redundant storage mechanisms are implemented to enhance fault tolerance and maintain continuous data availability. Erasure coding (EC) is such a widely adopted and highly efficient technique. It partitions data into smaller pieces, generates redundant fragments, then disperses them across multiple storage nodes. Its design can enable data recovery in the event of failures or data losses. Notably, DSNs deviate from conventional storage networks by operating on top of a blockchain system, which serves as an incentivizing layer. Miners are incentivized to provide reliable storage services to clients, thereby fostering an open and manageable storage marketplace. Additionally, the blockchain can function as a state machine replication protocol, ensuring the integrity of file storage against Byzantine behaviors. However, current EC-based DSNs overlook the following two critical issues that significantly impact their security and performance.

\begin{problem}
How to guarantee security when applying erasure coding in a DSN against Byzantine attackers?
\end{problem}
In comparison to traditional methods of data replication, EC offers improved storage utilization while maintaining fault tolerance. Prominent examples of DSNs, such as Sia \cite{sia} and Storj \cite{storj}, utilize EC to reduce redundancy. However, these systems cannot guarantee the integrity of the EC encoding process confronting Byzantine encoders. As shown in Fig.~\ref{fig:byzantine_encoder}, if an encoder maliciously alters the data, inconsistencies between the decoded and encoded files \cite{ecp} may appear. This attack may remain undetected before the decoding process ends, potentially disrupting the storage service. BFT-Store \cite{bftstore} introduces Byzantine fault tolerance to address such attacks. However, it requires that each data chunk be encoded independently and locally by every encoder at the outset. This requirement is resource-intensive in terms of storage and bandwidth, and may even be impractical in DSNs where data chunks, such as videos, music, and photo galleries, are often too large to be transferred among encoders. In addition, private data cannot be shared at will to meet this requirement. Therefore, it is necessary to reconsider DSN-oriented EC methods that can defend against Byzantine attacks while maintaining reasonable encoding and decoding efficiency.

\begin{figure}[!htbp]
    \centering
    \includegraphics[width=0.48\textwidth]{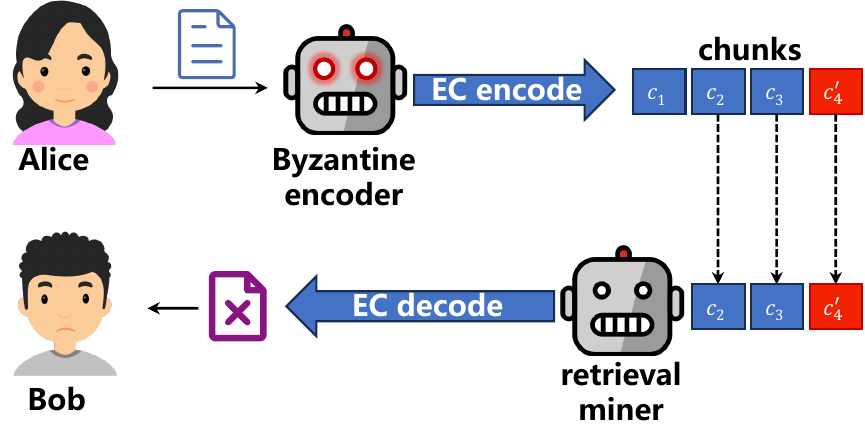}
    \caption{An attack on EC-based storage by a Byzantine encoder}
    \label{fig:byzantine_encoder}
\end{figure}

\begin{problem}
How to practically optimize the resiliency of BFT consensus in a DSN?
\end{problem}
To the best of our knowledge, the Expected Consensus mechanism employed by Filecoin is the sole Byzantine Fault Tolerant (BFT) consensus protocol specifically tailored for DSNs. However, its capability to withstand Byzantine adversaries is limited to cases where the adversary controls no more than 20\% of the total storage capacity pledged to the network \cite{ECConsensusAnal}. Nevertheless, the optimal resilience level of BFT consensus is 33\% in theory. While there exist multiple BFT consensus algorithms that can handle Byzantine adversaries controlling no more than one-third of the network nodes \cite{pbft, tendermint, honeybadger, hotstuff}, directly applying them to DSNs poses challenges, primarily due to their vulnerability to Sybil attacks in an open network environment. Specifically, considering all miners as equal entities enables the possibility of Sybil attacks, wherein an attacker can register multiple identities without possessing significant physical storage resources to surpass the BFT consensus threshold. Consequently, it is imperative for a BFT consensus algorithm to be adaptable to the practical scenario of a DSN.

To tackle these challenges, we introduce BFT-DSN, a DSN that can ensure Byzantine fault tolerance for both EC and consensus. In BFT-DSN, we devise a publicly verifiable EC scheme that tolerates up to $\lfloor \frac{n-1}{3}\rfloor$ Byzantine faults, where $n$ represents the total number of sectors pledged in a DSN. Moreover, BFT-DSN incorporates a novel Storage-Weighted BFT (SW-BFT) consensus algorithm that assesses Byzantine adversaries based on their storage resource ratio rather than the number of consensus nodes they possess, to address the Sybil attack issue. The contributions are summarized as follows:
\begin{enumerate}
    \item To the best of our knowledge, BFT-DSN is the first DSN that achieves both BFT erasure coding and BFT consensus with optimal resiliency. Specifically, we integrate erasure coding, BFT consensus, and threshold signature to unify their security upper bound to $\lfloor \frac{n-1}{3}\rfloor$. 
    \item We propose a secure erasure coding scheme against Byzantine encoders, leveraging homomorphic fingerprints and threshold signatures to achieve public verifiability.
    \item BFT-DSN supports optimal resilience when $f = \lfloor \frac{n-1}{3}\rfloor$. To achieve this goal, we introduce a storage-weighted BFT consensus algorithm and a proof of storage mechanism that audits the storage power of each miner.
    \item Last but not the least, we implement BFT-DSN and evaluate its performance through extensive experiments. The results demonstrate the superior performance of our BFT-DSN in storage cost and operation latency compared to the state-of-the-art DSNs, while providing a robust security guarantee.
\end{enumerate}

The rest of this paper is organized as follows. Section~\ref{sec:related} summarizes the most related work, presents preliminary knowledge, and explains the problem settings. Section~\ref{sec:design} details our BFT-DSN design and demonstrates how it works. Key properties and performance evaluation results of BFT-DSN are respectively reported in Section~\ref{sec:analysis} and Section~\ref{sec:exp}. Finally, we summarize this paper in Section~\ref{sec:conclusion}.

\section{Related Work, Preliminaries, and Models}
\label{sec:related}
\subsection{Related Work}

Based on the approaches of realizing storage redundancy, we categorize existing DSNs into two types: replication-based and erasure coding-based. We also examine the research effort in applying erasure coding to distributed networked storage. Finally, we provide a brief analysis of the problems faced by the current solutions.


\textbf{Replication-based DSN.}
Swarm \cite{swarm} was developed based on Ethereum \cite{ethereum}. It achieves data redundancy via the so-called pull-sync protocol, which duplicates and syncs file chunks to multiple miners. Swarm utilizes Proof-of-Stake consensus\footnote{Since September 15th, 2022, Ethereum has switched its consensus protocol from Proof-of-Work to Proof-of-Stake.}, but the Byzantine fault tolerance of this consensus protocol has not been proven. Filecoin \cite{filecoin} generates file replicas to provide redundant storage. It utilizes Proof-of-Replication (PoRep) and Proof-of-Space-Time (PoST) to verify the retrievability of the replicas. Filecoin's consensus protocol is called Expected Consensus, where the probability of a miner proposing a valid block is proportional to the amount of storage resources that miner contributes to the Filecoin network. However, research has shown that Filecoin's Expected Consensus can only tolerate up to 20\% of the storage resources being controlled by Byzantine adversaries \cite{ECConsensusAnal}. FileDAG \cite{filedag} provides multi-version file storage in a cost-effective and time-efficient manner. It stores multi-version files in the form of increments and provides redundant storage by duplicating these increments. To achieve consensus, FileDAG utilizes a two-layer DAG-based blockchain ledger inspired by DAG-Rider \cite{dagrider}. DAG-Rider is an asynchronous Byzantine Atomic Broadcast protocol that can tolerate up to 33\% of the nodes being Byzantine. Note that FileDAG's implementation of DAG-Rider does not take into account the varying importance of miners who contribute different amounts of storage resources. This oversight allows Byzantine adversaries to register multiple identities without sufficient physical storage resources, potentially bypassing the BFT consensus threshold.

\textbf{Erasure Coding-based DSN.}
Sia \cite{sia} is a Decentralized Storage Network that makes use of EC to distribute and store data across a network of storage miners. It employs Reed-Solomon erasure coding to generate redundant chunks of data that are then distributed to different hosts for fault tolerance. In terms of consensus, Sia adopts Bitcoin's Proof-of-Work consensus algorithm, which has not been proven to be Byzantine fault tolerant.
Storj \cite{storj} is another decentralized storage network that leverages EC for data storage. Storj uses a variation of Reed-Solomon erasure coding called forward error correction (FEC). Similar to Sia, Storj distributes data across multiple nodes to enhance fault tolerance and durability. Storj was also developed based on Ethereum, thereby utilizing Proof-of-Stake for consensus.
Both Sia and Storj demonstrate the potential of erasure coding in decentralized storage networks. 

\textbf{Decentralized Erasure Coding in Distributed Storage.}
There exist research efforts in distributed networked storage that aim to apply EC in a secure manner. Although such research efforts do not specifically address Byzantine threats, they still provide valuable insights.
Dimakis \textit{et al.} \cite{decentralizedEC} proposed decentralized erasure codes, which are created using a randomized network protocol, in which each client divides its files into chunks and sends them to randomly and independently selected storage nodes. Each storage node then creates a random linear combination of the received chunks, based on which the original files can then be recovered. Lin \textit{et al.} \cite{secureDEC} designed a secure decentralized erasure code that combines the concepts of threshold public key encryption and decentralized erasure codes. Their solution ensures that even if all storage nodes are compromised, the attacker cannot compute the content of the original file. These two schemes aim to reduce storage costs within the storage nodes by combining chunks from different files. However, such an approach has a drawback: when decoding a file, the client needs to gather data that is not directly related, thereby wasting bandwidth and computational resources.

\textbf{Analysis.}
Overall, the existing literature suggests that erasure coding can be an efficient method to ensure data availability in decentralized storage networks, while BFT consensus protocols can provide strong guarantees of data consistency and availability. 
A summary on the major adopted technologies and properties of BFT-DSN and the state-of-the-art DSNs is reported in Table~\ref{tab:related}. One can see that current DSNs do not provide verifiability when implementing EC (e.g., \cite{sia, storj}), and DSNs that employ BFT consensus either do not achieve optimal Byzantine resiliency or fail to consider the varying importance of miners who contribute different amounts of storage resources \cite{filecoin, filedag}. 

\begin{table}[htbp]
	\begin{threeparttable}
		\caption{Comparison of BFT-DSN with Existing DSNs}
			\begin{tabular}{l c c c c c}
				\toprule[1pt]
				 & \multicolumn{1}{c}{\textbf{\begin{tabular}[c]{@{}c@{}}Erasure\\ Code\end{tabular}}} & \multicolumn{1}{c}{\textbf{\begin{tabular}[c]{@{}c@{}}Verifiable \\ EC\end{tabular}}} & \multicolumn{1}{c}{\textbf{\begin{tabular}[c]{@{}c@{}}BFT\\ Consensus\end{tabular}}} & \multicolumn{1}{c}{\textbf{\begin{tabular}[c]{@{}c@{}}Byzantine\\ Threshold\end{tabular}}} \\
				\midrule[0.5pt]

    			Sia\cite{sia} & \checkmark & \xmark & \xmark & unknown \\

				Storj\cite{storj} & \checkmark & \xmark & \xmark & unknown \\

				Swarm\cite{swarm} & \xmark & \na & \xmark & unknown  \\

				Filecoin\cite{filecoin} & \xmark & \na & \checkmark & 20\% SW \\

                    FileDAG\cite{filedag} & \xmark & \na & \checkmark & 33\% \\
                    
				\textbf{BFT-DSN} & \checkmark & \checkmark & \checkmark & 33\% SW \\
				\bottomrule[1pt]
			\end{tabular}
			\label{tab:related}
		\begin{tablenotes}
		    \item SW: Storage Weighted
		\end{tablenotes}
	\end{threeparttable}
	\end{table}

\subsection{Preliminaries}

In this subsection, we provide the preliminary knowledge that are needed by our BFT-DSN design. 

\subsubsection{Decentralized Storage Network (DSN)}

DSNs aggregate storage spaces offered by multiple independent storage providers and self-coordinate to provide reliable and secure global data storage and retrieval services to clients without relying on any trusted third party. Generally speaking, the workflow of a DSN consists of two phases: put and get. Users put their files into the storage network and also get files from the network. Miners get paid by users by handling their requests. A DSN must guarantee data availability, integrity, and fault tolerance. The two common terms in a DSN that are heavily used by our BFT-DSN are sector and Proof-of-Storage (PoS). A sector is the unit of the storage space added to the network. Miners pledge sectors, contributing storage spaces to the network. These sectors act as guarantees to the network, ensuring that a specific amount of storage will stay available for a set duration for data storage. In the implementation of BFT-DSN, all sectors are standardized to a specific size, such as 32 GB.
PoS helps miners demonstrate that they have been continuously storing data in a sector. In Filecoin, Proof-of-Replication (PoRep) and Proof-of-Spacetime (PoSt) are combined to create a PoS scheme. PoRep ensures that a specified number of file copies have been generated, while PoSt confirms that a file copy has been stored continuously.


\subsubsection{Erasure Coding (EC)}

Erasure coding (EC) is an alternative to data replication because it incurs significantly less storage overhead while maintaining equal (or better) reliability. Many works optimize the coding efficiency and recovery bandwidth, such as pyramid coding \cite{pyramidecodes}, piggyback coding \cite{piggybackcodes}, and lazy recovery \cite{lazyrecoverycodes}. Reed-Solomon coding \cite{rscode} is the most widely used EC, in which a given file $\mathcal{F}$ is first split into $K$ data chunks and then encoded into $K+M$ chunks with $M$ parity chunks, denoted by $(K, M)$-RS. Both the data chunks and the parity chunks are termed chunks in this paper. The coding algorithm ensures that any $K$ out of the $K+M$ chunks are sufficient to reconstruct the original file, which implies that the $(K, M)$-RS can tolerate the absence of $M$ chunks. For example, in a $(4,2)$-RS coding, a 4~MB file is first divided into four 1~MB data chunks, then two additional 1~MB parity chunks are created to provide redundancy. RS coding computes parity chunks according to its data over a finite field by the following equation:
\begin{equation}
	\label{equ:vandermonde}
	\left[\begin{array}{cccc}
		1 & \alpha_0^1 & \dots & \alpha_0^{K-1} \\
		1 & \alpha_1^1 & \dots & \alpha_1^{K-1} \\
		\vdots & \vdots & \ddots & \vdots \\
		1 & \alpha_{M-1}^1 & \cdots & \alpha_{M-1}^{K-1}
	\end{array}\right] \times \left[\begin{array}{c}
	d_1 \\ d_2 \\ \vdots \\ d_K
	\end{array}\right] = \left[\begin{array}{c}
	p_{1} \\ p_{2} \\ \vdots \\ p_{M}
	\end{array}\right] ,
\end{equation}
where $d_1, \dots, d_K$ are $K$ data chunks and $p_{1}, p_{2} \dots, p_{M}$ are $M$ parity chunks. The parameters $\alpha_0, \cdots, \alpha_{M-1}$ in the Vandermonde matrix (the $M \times K$ matrix in (\ref{equ:vandermonde})) are $M$ different numbers.
In this paper, we denote the process of encoding $\mathcal{D}$, a set of $K$ data chunks, into $\mathcal{C}$, a set of $K+M$ chunks, as
\begin{equation}
	\label{equ:ecencode}
	\begin{array}{c}
		\mathcal{C} = \{c_1, \dots, c_{K+M}\} =  \mathsf{Encode}(\mathcal{D}), where\\
		\mathcal{D} = \{d_1, \dots, d_{K}\},
	\end{array}
\end{equation}
and the process of recovering the $K$ data chunks in $\mathcal{D}$, given any $K$ out of the $K+M$ chunks in $\mathcal{C}$, as:
\begin{equation}
	\mathcal{D} = \mathsf{Decode}(c_{i_1}, \dots, c_{i_K}).
\end{equation}

\subsubsection{Homomorphic Fingerprints (HF)}
If a Byzantine node generates incorrect parity chunks during encoding, the encoded file may become unrecoverable. Therefore, it is necessary to verify the correspondence between a file and its erasure-coded chunks. To address this issue, Hendricks et al. \cite{verifiableEC} developed homomorphic fingerprint (HF) that enables the verification of all erasure-coded chunks. Homomorphic fingerprint $\mathsf{HF}()$ is a type of hash function with homomorphic property that the fingerprint of an encoded chunk can also be calculated by encoding the fingerprints of the data chunks. Specifically, for a $(K, M)$-RS described above, homomorphic fingerprint guarantees that if (\ref{equ:ecencode}) holds,
then
\begin{equation}
	\label{equ:hfencode}
	\{\mathsf{HF}(c_1), \dots, \mathsf{HF}(c_{K+M})\} = \mathsf{Encode}(\mathsf{HF}(d_1), \dots, \mathsf{HF}(d_K)).
\end{equation}
HF allows us to verify the correctness of $\mathsf{Encode}(d_1, \dots, d_K)$ by calculating the fingerprints of the chunks in $\mathcal{C}$ without running $\mathsf{Encode}(d_1, \dots, d_K)$. In Eq.(\ref{equ:hfencode}), $\mathsf{Encode}()$ takes $K$ fingerprints as its input, which is much smaller than $K$ chunks, the input of (\ref{equ:ecencode}). 

\subsubsection{Weighted Threshold Signature (WTS)}
In BFT-DSN, we utilize the weighted threshold signature (WTS) \cite{weightedTS} to accelerate verification for chunks. Unlike traditional threshold signature schemes, WTS enables the designation of nodes with varying weights. Here are the WTS functions used in BFT-DSN:
\begin{enumerate}
    \item $\mathsf{WTS-Setup}(1^\kappa) \rightarrow pp$: The setup function takes the security parameter as input and outputs the public parameters $pp$ of the signature scheme.
    \item $\mathsf{WTS-KeyGen}(pp, nn, w) \rightarrow vk, ak, [{sk}_1, \dots, {sk}_{nn}]$: The key generation function takes as input the public parameters $pp$, the total number of miners $nn$, and a vector of weights $w$. The function outputs the global verification key $vk$, aggregation key $ak$, and per miner signing key ${sk}_i$.
    \item $\mathsf{WTS-PSign}(m, {sk}_i) \rightarrow \sigma_i$: Miner $i$ uses the sign function with its signing key ${sk}_i$ to generate a partial signature $\sigma_i$.
    \item $\mathsf{WTS-Aggregate}(\{\sigma_i\}, ak) \rightarrow \sigma$: On input of a set of partial signatures and the public aggregation key $ak$, the aggregate function generates an aggregate signature $\sigma$.
    \item $\mathsf{WTS-Verify}(m, \sigma, vk, t) \rightarrow 0/1$: The verify function takes as input a message $m$ and its signature $\sigma$, the global verification key $vk$, and a weight threshold $t$, and outputs 1 if and only if $m$ is signed by signers with a total weight of at least $t$.
\end{enumerate}

\subsection{Models and Goals}
\label{ss:assumption}

\subsubsection{System Model} In the context of our BFT-DSN, the system comprises clients and miners. Clients use storage services by sending storage and retrieval requests through blockchain to miners. Miners provide storage resources in the form of sectors to the DSN and handle clients' requests. BFT-DSN operates on a partially synchronous network model, which incorporates a known bound $\Delta$ and an unknown Global Stabilization Time (GST). After GST, the network achieves synchronization most of the time, with all transmissions between honest nodes arriving within time $\Delta$. The scenario of BFT-DSN resembles a consortium blockchain, where Byzantine Fault Tolerance (BFT) consensus protocols are primarily employed. We refer to this scenario as a consortium DSN.

\subsubsection{Threat Model}
We assume a Byzantine adversary model in which dishonest nodes can carbitrarily betray the DSN protocol. Particularly the following two types of attacks are often considered in a DSN:
\begin{itemize}
	\item Sybil Attacks: Malicious miners may pretend to store more chunks (and get paid for them) than the ones actually stored by creating multiple Sybil identities. For instance, a miner may use some of its sectors to provide storage services instead of using all the sectors pledged to the network. As a result, the actual redundancy of the file may be lower than expected.
	\item Generation Attacks: Malicious miners may falsely claim to have stored a large amount of data when in reality they are generating the data on demand using a small program. This deceptive practice can increase the rewards for the miners in BFT-DSN, as rewards are based on the amount of storage being utilized. Additionally, it can compromise storage reliability by falsely claiming to offer duplicated storage when it actually not. If the size of the program is smaller than the claimed amount of stored data, the deception can be even more successful.
\end{itemize}

In this "consortium DSN", $n$ represents the total number of pledged sectors, and $f$ the maximum number of sectors controlled by all Byzantine adversaries. BFT-DSN considers Byzantine faults under the assumption that $n = 3f + 1$. It is important to note that this assumption is based on the number of sectors, not nodes, and the network scale in a consortium DSN is relatively small, typically with $n$ not exceeding 1000. We denote these $n$ sectors in the DSN as $\{\mathsf{sec}_1, \cdots, \mathsf{sec}_n\}$, and the miner that pledges $\mathsf{sec}_i$ as $N(i)$, the host miner of $\mathsf{sec}_i$.

\subsubsection{Design Goals}
We design BFT-DSN with the following goals:

\begin{itemize}
	\item Availability: Each file must be available for all honest nodes when the network is synchronized, which means no file can be lost if no more than $f=\lfloor\frac{n-1}{3}\rfloor$ sectors are controlled by adversary.
	\item Verifiability: Integrity of each file should be verifiable during its whole life cycle including uploading, encoding, storing, downloading and decoding, to make the owner of the file have high confidence on the integrity of the file.
	\item Efficiency: With the goals of availability and verifiability satisfied, the I/O performance of BFT-DSN, in terms of put and get latencies, should be comparable to that of the existing state-of-the-art DSNs.
\end{itemize}

\section{BFT-DSN Design}
\label{sec:design}
In this section, we begin with an overview on BFT-DSN and then detail its design.

\subsection{Overview}
\label{ss:overview}

Fig.~\ref{fig:strawman} provides an overview on BFT-DSN, a DSN that incorporates BFT erasure coding and BFT consensus for optimal resiliency. In our BFT-DSN, miners are categorized based on the functions they provide: encoding, storage, and retrieval. As shown in Fig.~\ref{fig:strawman}, an encoding miner encodes files from clients into chunks and distributes them among storage miners. A storage miner provides storage space in sectors and stores the received chunks in its sectors. A retrieval miner collects chunks from storage miners, decodes them, and then sends the recovered file to the requesting client.

More specifically, a file is first encoded into chunks by an encoding miner, then each chunk is stored in a sector provided by a storage miner. To guarantee security against Byzantine attackers, BFT-DSN makes use of homomorphic fingerprints and threshold signatures to verify the correctness of the encoding process and the integrity of the chunks. These techniques also ensure that the integrity of the chunks collected during get operations can be effectively verified.
BFT-DSN employs blockchain to audit events such as put/get operations and expiration of the storage for a file. It utilizes a novel storage-weighted BFT consensus algorithm to reach consensus on new blocks added to the blockchain. This consensus algorithm takes into account the differences in the amounts of storage resources contributed by the storage miners and achieves optimal Byzantine resiliency. To monitor the storage space physically contributed by each miner and support consensus, we develop a Proof of Storage scheme for BFT-DSN, which uses a Merkle tree.
As an example to illustrate the Byzantine resilience of BFT-DSN, Fig.~\ref{fig:strawman} illustrates that, even if storage miner 1, who controls one sector, is compromised and refuses to provide the chunk it stores, the retrieval miner is still able to decode the requested file from the collected chunks.

\begin{figure}[!htbp]
    \centering
    \includegraphics[width=0.48\textwidth]{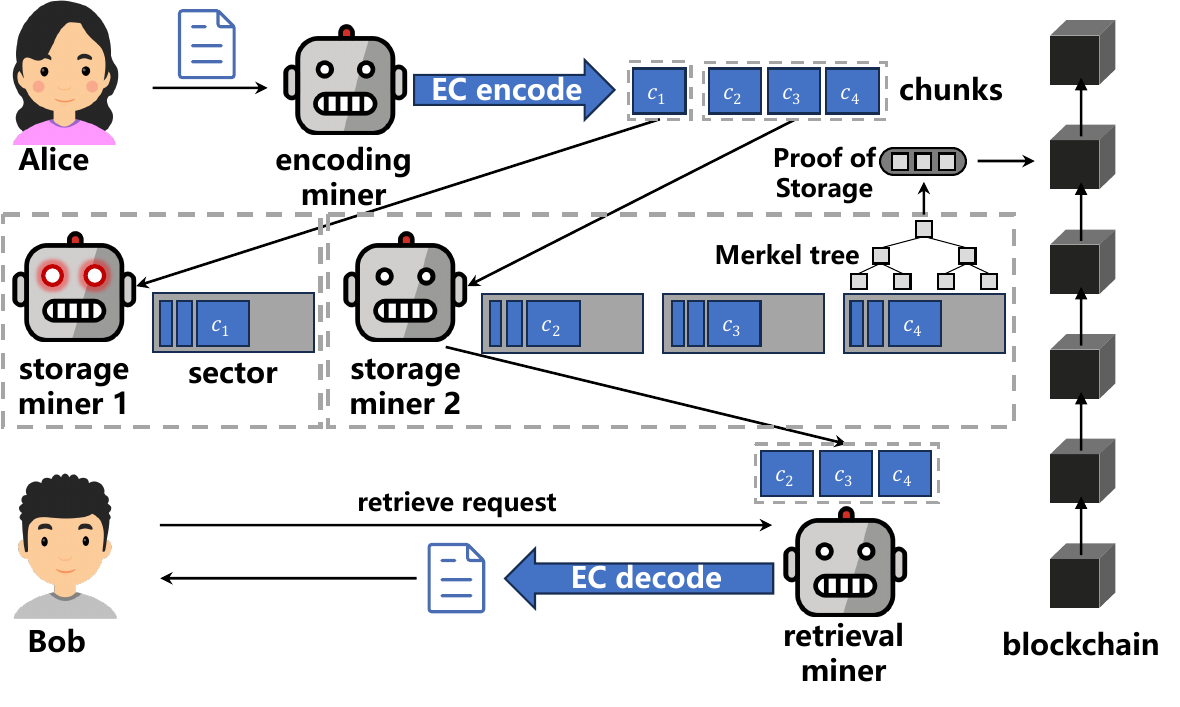}
    \caption{An overview on BFT-DSN}
    \label{fig:strawman}
\end{figure}

\subsection{The Put Operation in BFT-DSN}
\label{ss:storage}

\begin{algorithm}[htbp]
	\caption{Put Operation in BFT-DSN}
	\label{alg:put}
	
	\SetKwProg{Fn}{Function}{}{end}
	\SetKwProg{Upon}{upon}{ do}{end}
	\SetKwFunction{verifychunk}{$\mathsf{VerifyChunk}$}
	\SetKwFunction{verifyHF}{$\mathsf{VerifyID}$}
	\SetKwFunction{signHF}{$\mathsf{SignFingerprints}$}
	\SetKwFunction{split}{$\mathsf{Split}$}
	\SetKwFunction{encode}{$\mathsf{Encode}$}
	\SetKwFunction{decode}{$\mathsf{Decode}$}
	\SetKwFunction{send}{$\mathsf{Send}$}
	\SetKwFunction{tssign}{$\mathsf{WTS\text{-}PSign}$}
	\SetKwFunction{tsaggre}{$\mathsf{WTS\text{-}Aggregate}$}
	\SetKwFunction{tsverify}{$\mathsf{WTS\text{-}Verify}$}
	\SetKw{fileF}{$\mathcal{F}$}
	\SetKw{ID}{$\mathsf{ID}$}
	\SetKw{TX}{$\mathsf{TX}$}
	\SetKw{sec}{$\mathsf{sec}$}
	\SetKwFunction{hf}{$\mathsf{HF}$}

        \role{Client} \\
        send $\mathcal{F}$ to an encoding miner \\
        $\mathcal{D} = \{d_1, \cdots, d_{n-f}\} \leftarrow \split(\fileF)$\\ 
        $\mathcal{H} = \{h_1, \cdots, h_{n-f}\} \leftarrow \{\hf(d_1), \cdots, \hf(d_{n-f})\}$ \\
        $\ID_{\fileF} \leftarrow H(h_1\|\cdots\|h_{n-f})$ \\
	$\TX_{\fileF} \leftarrow \langle \mathsf{STORE}, \ID_{\fileF}, \mathcal{H} \rangle$ \\	
        send $\TX_{\fileF}$ to blockchain \\
	\role{Encoding Miner} \\
	\textbf{Input:} a file \fileF \\
        \textbf{Output:} encoded chunks $\mathcal{C}$ \\
	$\mathcal{D} = \{d_1, \cdots, d_{n-f}\} \leftarrow \split(\fileF)$\\ 
	$\mathcal{C} = \{c_1, \cdots, c_n\} \leftarrow \encode(\mathcal{D})$ \\
	\For{$i$ \textbf{from} $1$ \textbf{to} $n$}{
		send $c_i$ to $N(i)$
	}
	\role{Storage Miner $p$}\\
	\Upon{receiving $\TX_{\fileF}$}{
        // denote $\TX_{\fileF}.\mathcal{H}$ by $\{h_1, h_2, \cdots, h_{n-f}\}$ \\
		\If{$\TX_{\fileF}.\ID_{\fileF} \stackrel{?}{=} H(h_1\|\cdots\|h_{n-f})$}{
			$\{h_1', \dots, h_n'\} \leftarrow \encode(\TX_{\fileF}.\mathcal{H})$ \\
			\For{$i$ \textbf{from} $1$ \textbf{to} $n$}{
				$\sigma_p \leftarrow \tssign(h_i', {sk}_p)$\\
				send $\sigma_p$ to $N(i)$
			}
		}
	}
	\Upon{receiving signatures for $\hf(c)$ with total weight $\ge f+1$}{
		// denote received signatures for $\hf(c)$ by $\{\sigma_i\}$\\
		$\sigma^{(c)} \leftarrow$\tsaggre($\{\sigma_i\}, ak$)\label{codeline:aggresig}\\
		store $\sigma^{(c)}$
	}
	\Upon{receiving chunk $c$}{
            \If{\tsverify($\hf(c), \sigma^{(c)}, vk, f+1$) \textbf{is} true}{store $c$}
	}
\end{algorithm}

In this section, we describe the put operation of BFT-DSN, where a client puts a file $\mathcal{F}$ onto the network. The pseudocode of this operation is shown in Algorithm~\ref{alg:put}.

First, the client sends file $\mathcal{F}$ to an encoding miner to encode the file into chunks (line 2). The encoding miner is randomly selected from all miners based on their storage weights (explained in Section~\ref{ss:consensus}). To ensure that miners who receive these chunks can verify their integrity, the client signs a $\mathsf{STORE}$ transaction to the blockchain and includes the homomorphic fingerprints of the data chunks in that transaction (line 3-7). Homomorphic fingerprints of the parity chunks are not included to save on-chain storage as they can be calculated from those of the data chunks using Eq.~(\ref{equ:hfencode}). Additionally, to establish a binding relationship between these homomorphic fingerprints and the file, we set the identifier of the file $\mathcal{F}$ to be $\mathsf{ID}_\mathcal{F}$, which is the hash of the concatenation of the homomorphic fingerprints (line 5), and include it in the $\mathsf{STORE}$ transaction (line 6).  

When receiving a file from a client, an encoding miner encodes the file into $n$ chunks using a $(K, M)$-RS code (line 11-12). Here $K=n-f$ and $M=f$, which is to be explained later. Then the encoding process of the file $\mathcal{F}$ can be represented as follows (line 11-12):
\begin{equation}
	\mathcal{D} = \{d_1, \cdots, d_{n-f}\} = \mathsf{Split}(\mathcal{F}),
\end{equation}
\begin{equation}
	\label{equ:encode}
	\mathcal{C} = \{c_1, \cdots, c_n\} = \mathsf{Encode}(\mathcal{D})
\end{equation}
After encoding $\mathcal{F}$ into $n$ chunks, the encoding miner distributes them to the $n$ sectors in BFT-DSN, one chunk to each sector (line 13-14). Particularly, $c_i$ is forwarded to $\mathsf{sec}_i$, where $i = 1, \cdots, n$. Note that one miner can pledge multiple sectors, therefore different chunks may be received by the same miner.

When receiving a $\mathsf{STORE}$ transaction from the blockchain, a storage miner first verify the correspondence between the identifier $\mathsf{ID}_\mathcal{F}$ and the homomorphic fingerprints of the data chunks. This can be done by recalculating the hash of the concatenation of the homomorphic fingerprints of the data chunks and ensuring that the resulting hash equals $\mathsf{ID}_\mathcal{F}$ (line 18). If the verification succeeds, the miner proceeds to calculate the homomorphic fingerprints of the parity chunks. This is done by EC encoding the homomorphic fingerprints of the data chunks, leveraging the homomorphic property (line 19). Next the miner signs the homomorphic fingerprint of each chunk with its signing key, generates a partial signature (line 21), then sends it to the miner storing the corresponding chunk (line 22).

To expedite the verification process in BFT-DSN, we assign each miner a weight that is equal to the number of sectors they pledge. By this way, each miner can receive partial signatures with a total weight at least $n-f$ for each chunk it stores. These signatures are then combined to generate an aggregate signature for that chunk (line 23-26). As a result, the storage miner can verify the integrity of the chunk received from the encoding miner using the corresponding aggregate signature. If the verification succeeds, the storage miner stores the chunk (lines 27-29).


\textbf{Choosing the EC parameters $K$ and $M$.}
Here we explain how we determine the values of $K$ and $M$, the two parameters of RS-coding. We use $n$ to represent the total number of sectors in the BFT-DSN network, and $f$ the maximum number of Byzantine sectors that can be tolerated. The value of $f$ is calculated as $f=\lfloor\frac{n-1}{3}\rfloor$. Since we distribute each chunk of a file to one sector, the total number of chunks encoded from a file is equal to the total number of sectors, i.e., $K+M=n$. Next we deduce the relationship of $K$, $M$, and $f$.
First, it is important to consider generation attacks in this context. With a $(K, M)$-RS code, if an adversary possesses $K$ or more chunks of the file $\mathcal{F}$, it can pretend to have stored any number of chunks of $\mathcal{F}$. This is possible because the adversary can recover $\mathcal{F}$ using any $K$ different chunks and encode additional chunks as needed. To protect against such an attack, $K$ must be set to a value explicitly greater than the maximum number of sectors controlled by Byzantine adversaries, which is $f$. 
Second, with the existence of adversaries hosting $f$ Byzantine sectors, at most $f$ chunks may be dropped or tampered. As the integrity of each chunk is verified with HF and WTS, tampered chunks can be identified. Thus, we need to make the file recoverable when $n-f$ correct chunks are available. Consequently, $K$ must be equal to or smaller than $n-f$.
Last but not least, it can be seen that the storage cost of a file $\mathcal{F}$ is $\frac{n}{K}\cdot |\mathcal{F}|$ (see Section~\ref{sec:analysis}). This indicates that the bigger the $K$, the lower the storage cost.
As a result, we let $K=n-f$, the largest value in the interval $(f, n-f]$. This interval is obviously not empty under the assumption that $n = 3f+1$. Thus we have $M=n-K=f$. As a conclusion, $(n-f, f)$-RS coding is chosen.

\textbf{The effect of HF and WTS in BFT-DSN.}
The calculated homomorphic fingerprints in line 19 are used to verify the integrity of the parity chunks. If the parity chunks are correctly encoded from the data chunks of $\mathcal{F}$, their homomorphic fingerprints would match the calculated homomorphic fingerprints. By using HF to verify the integrity of each chunk, BFT-DSN guarantees the security of its EC-based decentralized storage. 
Additionally, to expedite the verification process, we adopt the weighted threshold signature scheme, 
to ensure that the verification process can be done without the need of gathering and encoding $n-f$ homomorphic fingerprints every time. Note that while one can guarantee that each correct homomorphic fingerprint is signed by the miners with a total weight of at least $n-f$, an aggregate signature with a weight of $f+1$ is sufficient to validate a homomorphic fingerprint. This is because no honest miner would sign an invalid homomorphic fingerprint, and the sign weight of Byzantine miners does not exceed $f$. Therefore, we set the weight threshold for stopping waiting for partial signatures (line 23) and passing $\mathsf{WTS\text{-}Verify}$ (line 28) to be $f+1$ to avoid unnecessary latency.

\textbf{Storage Cost.}
Finally we briefly discuss the storage cost of BFT-DSN.
To store a file $\mathcal{F}$, $(n-f, f)$-RS first splits $\mathcal{F}$ into $n-f$ data chunks, each with a size of $|\mathcal{F}|/(n-f)$, then generates $f$ parity chunks of the same size. The total storage cost for the $n$ chunks is $n \cdot |\mathcal{F}|/(n-f)$.
Based on the assumption that $n=3f+1$, the storage cost of file $\mathcal{F}$, namely $\mathsf{C}_{\mathcal{F}}$, is
\begin{equation}
	\mathsf{C}_{\mathcal{F}} = \frac{3f+1}{2f+1}\cdot |\mathcal{F}| < \frac{3}{2}\cdot |\mathcal{F}| .
\end{equation}
Therefore, $\frac{3}{2}\cdot |\mathcal{F}|$ is an upperbound of $\mathsf{C}_\mathcal{F}$. 

\subsection{Consensus of BFT-DSN}
\label{ss:consensus}

In order to practically optimize the resiliency of BFT consensus in DSN, we design the Storage-Weighted BFT (SW-BFT) consensus, which is based on the Proof-of-Storage algorithm and Tendermint Core. In this section, we provide a detailed explanation on our carefully designed Proof-of-Storage algorithm, followed by a formal description on the SW-BFT consensus algorithm.

\subsubsection{Proof of Storage in BFT-DSN}
\label{sss:pos}

In our assumption where $n = 3f+1$, we define $n$ and $f$ based on the number of sectors rather than nodes. Specifically, $n$ represents the number of sectors pledged in BFT-DSN, while $f$ denotes the maximum number of sectors that Byzantine adversaries can control. Hence, to uphold our SW-BFT consensus design, it's crucial to verify the authenticity of each pledged sector.
In our BFT-DSN, PoSs are used to confirm whether a host miner has continuously stored data in a sector. These proofs also help adjust the weights of the miners during the consensus process.
Unlike Filecoin, which uses the computationally heavy PoRep algorithm to ensure storage replication and duplication, BFT-DSN achieves storage duplication through EC. As a result, the PoS algorithm in BFT-DSN focuses on achieving Proof-of-Retrievability \cite{por}. Another observation is that, as described in Section~\ref{ss:storage}, each time a file is put on BFT-DSN, the total sizes of the chunks in any two sectors are the same. This is because for every file put, equally sized chunks are distributed across all sectors, with one chunk going to each sector. Besides, in BFT-DSN, all sectors have the same size, such as 32 GB. 
Based on these observations, we design the PoS scheme of BFT-DSN as shown in Fig.~\ref{fig:pos}:

\begin{figure*}[bthp]
    \centering
    \includegraphics[width=\textwidth]{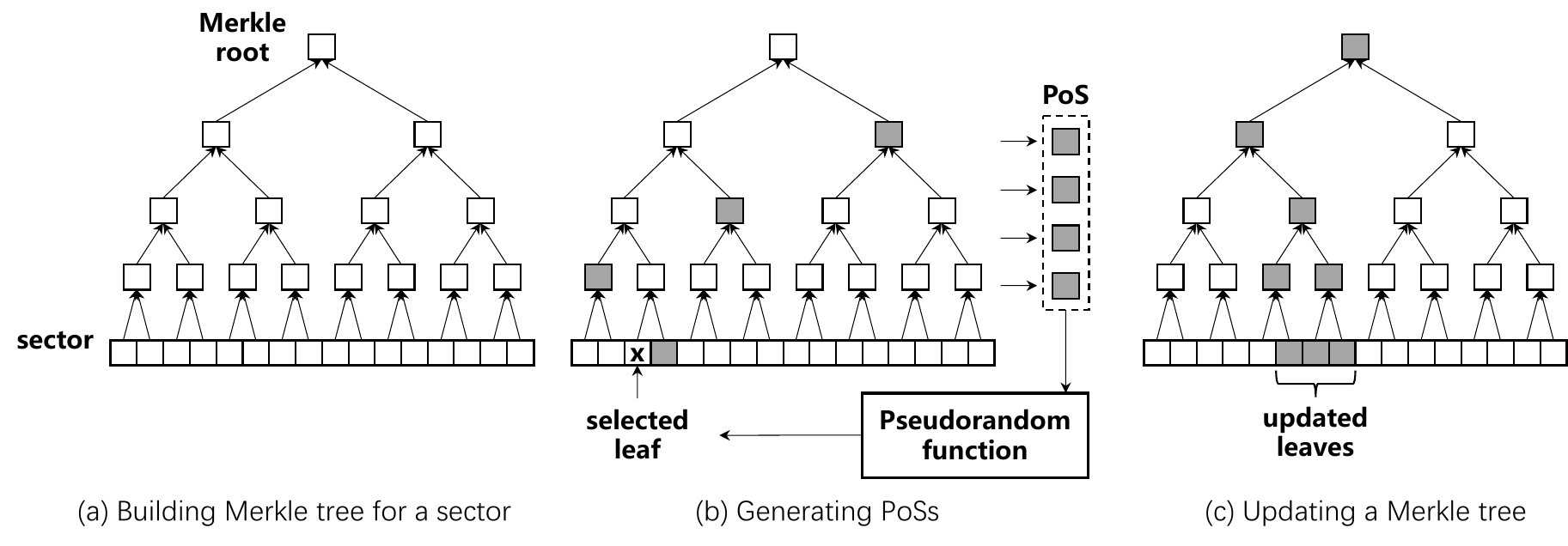}
    \caption{The PoS scheme in BFT-DSN}
    \label{fig:pos}
\end{figure*}


\textbf{Building A Merkle Tree.} The core concept of PoS in BFT-DSN involves employing a Merkle tree, which is a hash tree constructed from the current data in a sector. When a sector is pledged, it is first filled with random data, and a Merkle tree is also built from it. Each leaf in the Merkle tree represents a data fragment of certain size, e.g., 256 B. The hashes of adjacent leaf nodes are combined and hashed to create parent nodes. All the nodes, either a leaf node, or an internal node, have the same size. This process continues until there is only one root node left, representing the entire sector. After a Merkle tree of a sector is built, the Merkle root is sent to the blockchain to support PoS verification.

\textbf{Generating a Proof.} To produce a PoS for a sector, its host miner picks a random leaf from the sector's Merkle tree. The miner then supplies hashes corresponding to the siblings of each node along the path from the chosen leaf to the Merkle tree's root. Using the provided hashes, a verifier rebuilds the tree branches leading up to the root node and checks if the root node matches the expected value stored on the blockchain. Each PoS is broadcast as a transaction to the blockchain, allowing all nodes to verify it. A newly pledged sector is considered committed once its initial PoS is documented on the blockchain.

\textbf{Verifying a Proof.}
If a miner stores data correctly, the hashes it provides will be valid. This process enables the verifier to reconstruct the tree path and confirm the root node. However, if the storage is incorrect, the hashes won't match, which allows the verifier to identify the problem.
Miners generate PoSs periodically. To generate a new PoS for a sector, the host miner selects another leaf. This selection is based on a pseudorandom function that uses the previous PoS as input. The miner then repeats the steps described above.

\textbf{Updating the Merkle Tree.} When a set of chunks in a sector changes (for example, when a new file is added to BFT-DSN or a file in BFT-DSN expires), the Merkle tree built from the sector must be updated. When adding or removing a chunk to or from a sector, it means modifying a continuous segment of data within the updated sector. As a result, the corresponding leaves will be recalculated along with their ancestors. All other nodes in the Merkle tree remain unchanged. Once the Merkle tree is updated, the new Merkle root is sent to the blockchain. Generation of subsequent PoSs will be based on this updated Merkle tree.

\subsubsection{Storage-Weighted BFT (SW-BFT) Consensus}

To enhance the resiliency of BFT-DSN, we introduce a BFT consensus algorithm tailored for DSNs. This algorithm takes into account the number of sectors pledged by miners and adjusts their consensus weight accordingly. We call this the Storage-Weighted BFT (SW-BFT) consensus, which combines the PoS scheme of BFT-DSN with Tendermint Core, a BFT protocol that facilitates consensus among the miners with differing weights \cite{tendermint}. 
Note that to monitor the number of sectors each miner pledges and ensure consistent monitoring across all nodes in SW-BFT, we let each miner maintain a weight table that tracks the numbers of pledged sectors of other miners over time. The weight table is updated for new sector pledges (which increments the consensus weight), failed PoSs (which decrements the consensus weight), and other storage and consensus faults. 

SW-BFT reuses the three voting phases of the Tendermint Core to reach consensus on a block: prevote, precommit, and commit. Each phase necessitates $\frac{2}{3}$ of the votes to proceed. The key distinction between SW-BFT and the Tendermint Core lies in SW-BFT checking the $\frac{2}{3}$ vote threshold (corresponding to $n-f$ sectors out of $n$, where $n = 3f+1$) based on the weight table.
For instance, during the prevote phase, each miner collects prevote messages from other miners. These messages contain the block's hash being voted on and are signed by the voters. Upon receipt of a prevote message, a miner checks the voter's weight from its weight table, then adds this weight to the total weight of the votes received for the block. If this total weight reaches $n-f$, 
the miner moves to the next phase.
The weight accumulating procedures for the other two voting phases, namely precommit and commit, are similar to the prevote phase.

It's worth noting that every event updating the weight table corresponds to an on-chain transaction. Such an event becomes effective when the related transaction is confirmed on the blockchain. During a block's consensus process, each miner's view of the blockchain is locked at the previous block height. This ensures that the weight table is consistently observed by each miner in the consensus. Consistency in monitoring the weight table is crucial, as any inconsistency could lead to a consensus failure \cite{tendermint}.

\subsection{The Get Operation in BFT-DSN}

In this section, we explain the get operation in BFT-DSN, in which a client retrieves a file $\mathcal{F}$ from BFT-DSN using $\mathsf{ID}_\mathcal{F}$, the identifier of $\mathcal{F}$. The pseudocode for the get operation is shown in Algorithm~\ref{alg:decode}.

\begin{algorithm}[!htbp]
	\caption{The Get Operation in BFT-DSN}
	\label{alg:decode}

        \SetKwProg{Fn}{Function}{}{end}
	\SetKwProg{Upon}{upon}{ do}{end}
	\SetKwFunction{verifychunk}{$\mathsf{VerifyChunk}$}
	\SetKwFunction{verifyHF}{$\mathsf{HF\text{-}Verify}$}
	\SetKwFunction{signHF}{$\mathsf{SignFingerprints}$}
	\SetKwFunction{split}{$\mathsf{Split}$}
	\SetKwFunction{encode}{$\mathsf{Encode}$}
	\SetKwFunction{decode}{$\mathsf{Decode}$}
	\SetKwFunction{send}{$\mathsf{Send}$}
	\SetKwFunction{tssign}{$\mathsf{WTS\text{-}Sign}$}
	\SetKwFunction{tsaggre}{$\mathsf{WTS\text{-}Aggregate}$}
	\SetKwFunction{tsverify}{$\mathsf{WTS\text{-}Verify}$}
	\SetKw{fileF}{$\mathcal{F}$}
	\SetKw{ID}{$\mathsf{ID}$}
	\SetKw{TX}{$\mathsf{TX}$}
	\SetKw{sec}{$\mathsf{sec}$}
    \SetKw{msg}{$\mathsf{msg}$}
    \SetKw{req}{$\mathsf{req}$}
	\SetKw{ts}{$\mathsf{wts}$}
	\SetKwFunction{join}{$\mathsf{Concatenate}$}
	\SetKw{ccount}{$\mathsf{chunk\_count}$}
	\SetKwFunction{hf}{$\mathsf{HF}$}
	
	\role{Client}\\
	$\req_{\fileF} \leftarrow \langle \mathsf{RETRIEVE}, \ID_{\fileF} \rangle$\\
	send $\req_{\fileF}$ to a retrieval miner\\
	
	\role{Retrieval Miner}\\
	\textbf{Input:} $\req_{\fileF}$\\
	\textbf{Output:} $\fileF$\\
	broadcast $\req_{\fileF}$\\
	$\ccount \leftarrow 0$\\
	$\mathcal{C} \leftarrow$ an empty set\\
	\Upon{receiving $(c, \sigma^{(c)})$}{
		\If{\tsverify($\hf(c), \sigma^{(c)}, vk, f+1$)}{
			$\mathcal{C} \leftarrow \mathcal{C} \cup \{c\}$
		}
		\If{$|\mathcal{C}| \ge n-f$}{
			$\{d_1, \dots, d_{n-f}\} \leftarrow \decode(\mathcal{C})$\\
			$\fileF \leftarrow \join(d_1, \dots, d_{n-f})$\\
			\Return{$\fileF$}
		}
	}
	
	\role{Storage Miner $p$}\\
	\Upon{receiving $\req_{\fileF}$ from a retrieval miner}{
            \For{each chunk $c$ of \fileF stored by $p$}{
            send $(c, \sigma^{(c)})$ to the retrieval miner 
            }
	}
\end{algorithm}

First, the client randomly selects a retrieval miner and sends a retrieval request, identified by $\mathsf{ID}_{\mathcal{F}}$, to the retrieval miner (line 2-3). Like a storage miner in the put operation, the retrieval miner is chosen from all miners based on their storage weight. When the retrieval miner receives the request, it broadcasts it to all storage miners (line 7). After receiving the request, each storage miner sends the chunks of $\mathcal{F}$ and the accompanied aggregate signatures it stores to the retrieval miner (line 18-20). This allows the retrieval miner to verify the integrity of each chunk (line 11). Once the retrieval miner collects $n-f$ correct chunks, it can decode $\mathcal{F}$ (line 13-16):
\begin{equation}
	\{d_1, \dots, d_{n-f}\} = \mathsf{Decode}(c_{i_1}, \dots, c_{i_{n-f}}) ,
\end{equation}
\begin{equation}
	\mathcal{F} = \mathsf{Concatenate}(d_1, \dots, d_{n-f}) .
\end{equation}

To manage the scenario where the selected retrieval miner is Byzantine, we establish a timeout for the retrieval process. It's evident that a retrieval miner can always receive at least $n-f$ correct chunks for a file $\mathcal{F}$ and decode it within a fixed time. This is because there are at least $n-f$ sectors run by honest miners, each storing a chunk of $\mathcal{F}$. If the retrieval miner fails to produce the requested file $\mathcal{F}$ within a specific time interval, it is deemed faulty. Any node can report this, and a different retrieval miner is then selected.
If the retrieval miner outputs a file and sends it to the client in time, the client then verifies the integrity of the retrieved file, $\mathcal{F}$, using $\mathsf{ID}_{\mathcal{F}}$. If the selected retrieval miner sends an incorrect file, the client reports the error and attempts retrieval with a different retrieval miner.

\section{Byzantine Fault Tolerance of BFT-DSN}
\label{sec:analysis}

In this section, we analyze the key property of BFT-DSN: Byzantine fault tolerance, in which the fundamental concern is to ensure file retrievability even in the presence of Byzantine adversaries. This is achieved by ensuring the safety and liveness of consensus and the verifiability throughout a file's lifecycle.

\textbf{Safety and Liveness of SW-BFT Consensus.} First, we demonstrate that the SW-BFT consensus achieves both safety and liveness, even when dealing with Byzantine adversaries controlling up to $\lfloor\frac{n-1}{3}\rfloor$ out of $n$ sectors.

\begin{theorem} 
    (Soundness of the PoS algorithm in BFT-DSN). A miner can generate a valid PoS for a sector if and only if it is storing the data that belongs to that sector.
\end{theorem}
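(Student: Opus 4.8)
The plan is to establish the two directions of the ``if and only if'' separately, with the forward direction (correct storage $\Rightarrow$ valid PoS) being essentially a correctness argument and the reverse direction (valid PoS $\Rightarrow$ correct storage) resting on the collision resistance of the underlying hash function. First I would fix notation: let $\mathsf{sec}$ be a sector whose committed Merkle root $r$ is recorded on the blockchain, let $\ell$ be the leaf index selected by the pseudorandom function applied to the previous PoS, and let the claimed proof consist of the leaf value together with the sibling hashes $s_0, s_1, \dots, s_{d-1}$ along the root path, where $d = \log_2(|\mathsf{sec}|/\text{leaf size})$. The verifier recomputes hashes up the path and accepts iff the recomputed value equals $r$.

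For the forward direction, I would argue that if the host miner is faithfully storing the sector contents consistent with the committed root, then the leaf it reads at position $\ell$ and all the sibling hashes it supplies are exactly the values of the Merkle tree that hashes to $r$; by construction of the tree, recomputing along the path reproduces $r$, so $\mathsf{WTS}$-style verification — here just the Merkle check — returns $1$. This is a routine induction on the height of the path and I would state it as such rather than grinding through it.

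For the reverse direction, the key step is to show that a miner \emph{not} storing the correct sector data cannot (except with negligible probability) produce a path that hashes to $r$. I would proceed by contradiction: suppose the miner supplies a leaf value $v'$ differing from the true leaf $v$ at position $\ell$ (the case where the miner stores nothing at all, or stores a shorter program, reduces to this since it must fabricate \emph{some} $v'$), yet the recomputed path value equals $r$. Walking up the path and comparing the miner's recomputed intermediate hashes against the genuine ones, there must be a first level at which the two agree despite their inputs differing — this yields an explicit hash collision, contradicting collision resistance. Hence with overwhelming probability $v' = v$ and the supplied siblings match the genuine siblings, i.e., the miner is storing the data that reconstructs that portion of $\mathsf{sec}$.

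The main obstacle — and the point that needs care rather than a one-line dismissal — is that a single valid PoS only certifies one leaf-to-root path, not the \emph{entire} sector, so the statement as phrased (``storing the data that belongs to that sector'') must be interpreted and defended probabilistically over the choice of $\ell$: a miner missing an $\epsilon$ fraction of the sector passes a challenge with probability at most $1-\epsilon$, and since challenges are issued periodically with a fresh pseudorandom $\ell$ derived from the prior proof, any non-negligible missing fraction is caught with overwhelming probability over time. I would make this amplification argument explicit, invoking the pseudorandomness of the leaf-selection function so that the adversary cannot predict which leaves will be challenged, and note that this is precisely the Proof-of-Retrievability guarantee cited from \cite{por}. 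Subtleties I would flag but not belabor: the root $r$ must be the genuinely committed value (guaranteed because root updates are on-chain transactions confirmed by consensus), and the ``only if'' is unconditional modulo the computational assumption on $H$.
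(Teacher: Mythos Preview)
Your proposal is correct and follows essentially the same two-part structure as the paper's proof: correctness of Merkle path reconstruction for the forward direction, and for the reverse direction the observation that a missing or altered leaf cannot hash to the committed root and will eventually be challenged under repeated queries. You are more explicit than the paper about the collision-resistance assumption and the probabilistic amplification over pseudorandom challenges, which the paper handles only informally with the phrase ``sooner or later.''
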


\begin{proof}
If a miner is storing all data in a sector, the miner can generate a PoS for this sector using the proof generation algorithm described in Section~\ref{ss:consensus}. The generated PoS can be validated because the Merkle root on the blockchain corresponds to the root of the Merkle tree constructed from all the data in the sector.
If a miner is not storing all the data in a sector, meaning that it has erased or modified some portion of the data in the sector, the corresponding leaves in the Merkle tree of that sector would be missing or changed. If one of these leaves is selected during the generation of a PoS, the miner will be unable to provide a Merkle path that produces the Merkle root stored on the blockchain. As  Merkle paths are continuously queried, sooner or later the erasure or modification of data can be detected through PoS verification.
\end{proof}

The soundness of the PoS algorithm in BFT-DSN prevents Byzantine adversaries from deploying Sybil attacks, in which a Sybil attacker pretends to maintain an arbitrary number of sectors and breaks the assumption that no more than $\lfloor\frac{n-1}{3}\rfloor$ sectors are controlled by Byzantine adversaries.


\begin{theorem}
    With no more than $\lfloor\frac{n-1}{3}\rfloor$ out of $n$ sectors controlled by Byzantine adversaries, SW-BFT guarantees safety and liveness as defined below:
    \begin{itemize}
        \item Safety: If an honest miner commits block $B$ at height $h$, no other honest miner decides on any block other than $B$ at height $h$.
        \item Liveness: Consensus on any block eventually ends.
    \end{itemize}
\end{theorem}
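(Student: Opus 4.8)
The plan is to prove both properties by reduction to the corresponding guarantees of the underlying Tendermint Core protocol, which are known to hold whenever the Byzantine voting weight is strictly below one third of the total. The only point at which SW-BFT departs from Tendermint is that each $\tfrac{2}{3}$-threshold is evaluated against the weight table rather than a flat node count, so the whole argument reduces to two ingredients: (i) the weight table faithfully reflects the sector-based fault assumption, i.e.\ the weight controlled by Byzantine parties is at most $f$ while the total weight is $n = 3f+1$; and (ii) all honest miners evaluate every threshold against \emph{the same} weight table during any given consensus instance. Granting (i) and (ii), the threshold $n-f = 2f+1$ is a genuine super-two-thirds quorum of the weights ($2f+1 > \tfrac{2}{3}(3f+1)$) and the Byzantine weight $f$ is a genuine sub-one-third minority ($f < \tfrac{1}{3}(3f+1)$), so Tendermint's safety and liveness arguments carry over with ``number of votes'' replaced by ``accumulated weight''.

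First I would establish (i). By Theorem~1, a miner can produce a valid PoS for a sector only if it actually stores that sector's data, so a Byzantine adversary physically controlling at most $f$ sectors can have at most $f$ sectors committed on-chain, and hence its entry weight in every honest miner's table is at most $f$. Pledges, failed PoSs, and the other fault events that mutate the table are all triggered by confirmed on-chain transactions, so they can only \emph{decrease} an adversary's weight and never inflate it past $f$; conversely honest miners, storing their data correctly, always pass PoS and retain their full weight, so the honest weight is at least $n-f = 2f+1$. Next I would establish (ii) by induction on the block height $h$: the weight table is a deterministic function of the confirmed chain prefix, and during consensus for height $h$ every honest miner locks its view at height $h-1$; by the safety of all heights below $h$ (the inductive hypothesis) the honest miners agree on the height-$(h-1)$ prefix, hence on the weight table used throughout the height-$h$ instance, pinning down a single common weight assignment for each round of that height.

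With (i) and (ii) in hand, safety follows from weighted quorum intersection: any two vote sets of weight at least $n-f = 2f+1$ overlap in weight at least $2(2f+1)-(3f+1) = f+1$, which exceeds the Byzantine weight $f$ and therefore contains at least one honest sector; since an honest miner never casts conflicting prevotes or precommits within the same height and round, and Tendermint's locking rule carries this non-equivocation across rounds, no two distinct blocks can both gather a precommit quorum at height $h$, so if an honest miner commits $B$ at height $h$ no honest miner decides a different block there. Liveness follows from the partial-synchrony argument of Tendermint: after GST all messages among honest miners (who jointly hold weight $\ge 2f+1$) arrive within $\Delta$, the weighted round-robin leader rotation eventually reaches a round with an honest proposer in which every honest prevote and precommit is delivered on time, the accumulated honest weight $2f+1 \ge n-f$ clears every threshold, and the block is committed.

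The main obstacle is not any single computation but the bookkeeping that ties these steps together: one must argue that the on-chain, view-locked weight table makes each consensus instance behave exactly like a \emph{fixed-weight} Tendermint run, so that the classical quorum-intersection and leader-rotation arguments apply without modification, and that no weight-table update during the instance can silently change a threshold under an honest miner's feet. The soundness of the PoS algorithm (Theorem~1) is precisely the ingredient that keeps this correspondence intact, since it is what prevents a Sybil adversary from acquiring weight beyond $f$ and thereby violating the one-third bound that the entire reduction rests on.
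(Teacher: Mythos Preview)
Your proposal is correct and follows the same reduction-to-Tendermint strategy as the paper: use PoS soundness (Theorem~1) to bound the Byzantine weight by $f < n/3$, then invoke Tendermint Core's known safety and liveness guarantees. You are in fact more thorough than the paper's own proof, which simply cites the Tendermint whitepaper after establishing the weight bound; your explicit treatment of weight-table consistency (your point~(ii)) and the weighted quorum-intersection and leader-rotation sketches are sound elaborations that the paper either relegates to the design discussion or omits entirely.
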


\begin{proof}
    Based on the soundness of the PoS algorithm in BFT-DSN, a miner can generate a valid PoS for a sector iff it is actively maintaining that sector. As a result, a sector is included in the weight table iff it is being maintained by a miner. Therefore, if we assume that Byzantine adversaries maintain at most $f$ sectors and a total of $n = 3f+1$ sectors are being maintained, Byzantine miners should have less than $\lfloor\frac{n-1}{3}\rfloor$ voting power. It has been proven that if there is no more than $\lfloor\frac{n-1}{3}\rfloor$ Byzantine voting power out of $n$, Tendermint Core guarantees safety and liveness \cite{tendermintwhitepaper}. Thus, the SW-BFT consensus guarantees safety and liveness.
\end{proof}

\textbf{Verifiability.} Secondly, we analyze the verifiability in BFT-DSN. We aim to demonstrate that during the entire lifecycle of a file, which includes the stages of uploading, encoding, storing, downloading, and decoding, it is impossible for a Byzantine miner to output a wrong file or chunk without being detected. This holds true even when $\lfloor\frac{n-1}{3}\rfloor$ out of $n$ sectors are controlled by Byzantine adversaries.

During file uploading, when a client sends the file $\mathcal{F}$ to an encoding miner, the integrity of the received file $\mathcal{F}'$ can be verified by recalculating $\mathsf{ID}_{\mathcal{F}'}$ and comparing it with $\mathsf{ID}_\mathcal{F}$ stored on the blockchain.
During EC encoding, an encoding miner generates $n$ chunks from the file $\mathcal{F}$. The correctness of the encoding process, which ensures that the $n$ chunks are derived from $\mathcal{F}$ without any tampering, is verified using homomorphic fingerprints. This verification relies on the homomorphic property, as shown in Eq.~\ref{equ:hfencode}.
During storage, the integrity of each chunk can be verified with PoS. BFT-DSN does not generate PoS for each chunk, but instead, it generates PoS for each sector. This is sufficient, as the integrity of a sector is a necessary condition for the integrity of a chunk within it.
During chunk downloading, miners send chunks along with the aggregate threshold signatures to the retrieval miner. The integrity of each chunk can be verified using its aggregate threshold signature. An aggregate threshold signature can pass verification only if the miners with a total weight of at least $f+1$ verify its homomorphic fingerprints and sign it. Additionally, Byzantine adversaries cannot have a weight higher than $f$ based on the weight assignment of the weighted threshold signature and the soundness of PoS in BFT-DSN. As a result, the verification of the progress of chunk downloading is achieved.
Finally, a retrieval miner decodes the collected chunks and sends the output file to the client. The verification of the integrity of the received file $\mathcal{F}'$ is performed in the same way as that during file uploading. It is also based on $\mathsf{ID}_\mathcal{F}$.

\textbf{Retrievability.} Finally, we demonstrate the Byzantine fault tolerance of BFT-DSN by showing that it satisfies the fundamental property of a DSN: retrievability, even in the presence of Byzantine adversaries. In the following we formally prove that BFT-DSN achieves retrievability even when up to $\lfloor\frac{n-1}{3}\rfloor$ sectors are controlled by Byzantine adversaries.

\begin{lemma}
	\label{lemma:honest_retrieval_miner}
When there are a total of $n$ sectors in BFT-DSN, if at most $\lfloor\frac{n-1}{3}\rfloor$ sectors are controlled by Byzantine adversaries, an honest retrieval miner can successfully output the file $\mathcal{F}$ during the retrieval process.
\end{lemma}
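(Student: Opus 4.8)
The argument will rest on three pillars — a counting bound on honest sectors, the partial-synchrony delivery guarantee, and the MDS decodability of the $(n-f,f)$-RS code — tied together by the verifiability property established above. First I would set $f=\lfloor\frac{n-1}{3}\rfloor$, so that $n=3f+1$ and at least $n-f$ of the $n$ sectors are hosted by honest miners. I would then recall the invariant maintained by the put operation (Algorithm~\ref{alg:put}): an honest storage miner stores a chunk $c_i$ only after $\mathsf{WTS\text{-}Verify}(\mathsf{HF}(c_i),\sigma^{(c_i)},vk,f+1)$ returns true, and, by the verifiability analysis, such a chunk must be exactly the $i$-th EC chunk of $\mathcal{F}$ carried together with its valid aggregate signature $\sigma^{(c_i)}$. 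Hence every honest sector that participated in storing $\mathcal{F}$ holds a correct, self-certifying chunk.

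Next I would trace one run of the retrieval protocol (Algorithm~\ref{alg:decode}) with an honest retrieval miner. It broadcasts the request for $\mathsf{ID}_\mathcal{F}$; by the system model, after GST every message between honest nodes is delivered within $\Delta$, so within a bounded interval the retrieval miner receives $(c,\sigma^{(c)})$ from each of the $\ge n-f$ honest sectors. Each such pair passes the verification on line 11 (weight threshold $f+1$) and is inserted into $\mathcal{C}$; conversely, a malformed chunk injected by a Byzantine miner cannot be accompanied by a signature of weight $\ge f+1$ on a fingerprint it does not match — Byzantine signing weight is at most $f$ and no honest miner signs a bad fingerprint — so it fails the check and is discarded without corrupting $\mathcal{C}$ or stalling progress. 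Since sector $i$ holds the distinct chunk $c_i$, the at least $n-f$ chunks contributed by honest sectors are pairwise distinct, so the condition $|\mathcal{C}|\ge n-f$ is reached in bounded time and the retrieval timeout does not fire against the honest miner.

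Finally I would invoke the MDS property of the $(n-f,f)$-RS code: any $n-f$ of the $n$ encoded chunks uniquely determine $\{d_1,\dots,d_{n-f}\}=\mathsf{Decode}(\mathcal{C})$, whence $\mathcal{F}=\mathsf{Concatenate}(d_1,\dots,d_{n-f})$ is reconstructed and returned as the output, establishing the claim. I expect the main obstacle to be the middle step: arguing rigorously that Byzantine storage miners can neither prevent the honest retrieval miner from gathering $n-f$ \emph{correct} chunks of $\mathcal{F}$ nor push a spurious chunk into $\mathcal{C}$. This is precisely where the verifiability property and the $f$-bound on Byzantine signing power — itself a consequence of the soundness of the PoS scheme (Section~\ref{sss:pos}), which blocks Sybil inflation of sector counts — must be combined, and where one must keep the present honest-retrieval-miner case cleanly separated from the timeout-and-reselect mechanism that handles a Byzantine retrieval miner.
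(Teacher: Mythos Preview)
Your proposal is correct and follows essentially the same argument as the paper: at least $n-f$ sectors are honest and hold correct chunks of $\mathcal{F}$, Byzantine chunks are filtered out by the verifiability guarantee (HF plus the $f$-bounded signing weight in WTS), and $n-f$ correct chunks suffice to decode via the $(n-f,f)$-RS code. You supply more detail than the paper does --- the explicit appeal to partial synchrony for delivery, the distinctness of the $c_i$, and the MDS property --- but the skeleton is identical.
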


\begin{proof}
During the retrieval process of any file $\mathcal{F}$, an honest retrieval miner can obtain at least $n-f$ different correct chunks of $\mathcal{F}$. This is because during the storage process of $\mathcal{F}$, the $n$ chunks are distributed to $n$ sectors, of which at least $n-f$ are honest. Therefore, at least $n-f$ chunks of $\mathcal{F}$ are stored in honest sectors. Sectors controlled by Byzantine nodes may provide incorrect chunks, which can be detected and rejected by an honest retrieval miner based on verifiability. Thus, an honest retrieval miner can always successfully gather $n-f$ correct chunks, which are sufficient to decode $\mathcal{F}$.
\end{proof}

\begin{theorem}
When there are a total of $n$ sectors pledged in BFT-DSN, if no more than $\lfloor\frac{n-1}{3}\rfloor$ sectors are controlled by Byzantine adversaries, BFT-DSN guarantees file retrievability. This means that for any file $\mathcal{F}$, its retrieval process will eventually succeed and finish in an expected number of $O(1)$ tries.
\end{theorem}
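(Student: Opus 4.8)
The plan is to combine Lemma~\ref{lemma:honest_retrieval_miner} with a probabilistic argument on the random choice of retrieval miner. First I would recall that the retrieval miner is sampled from all miners with probability proportional to their storage weight (the number of sectors pledged), exactly as the encoding and storage miners are selected in Section~\ref{ss:consensus}. Since Byzantine adversaries control at most $f = \lfloor\frac{n-1}{3}\rfloor$ of the $n = 3f+1$ sectors, the total honest storage weight is at least $n - f = 2f+1$, so in any single try the selected retrieval miner is honest with probability at least $\frac{2f+1}{3f+1} > \frac{2}{3}$.

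Next I would argue that a single try has exactly two outcomes, \textbf{success} and \textbf{clean restart}. If the selected miner is honest, Lemma~\ref{lemma:honest_retrieval_miner} guarantees it collects $n-f$ correct chunks and outputs $\mathcal{F}$; by the verifiability argument the client recomputes $\mathsf{ID}_{\mathcal{F}'}$ and accepts, so the try succeeds. If the selected miner is Byzantine, then either (i) it fails to deliver a file before the retrieval timeout, in which case any node reports it and a fresh retrieval miner is chosen, or (ii) it delivers a file $\mathcal{F}' \ne \mathcal{F}$, which the client detects because $\mathsf{ID}_{\mathcal{F}'} \ne \mathsf{ID}_\mathcal{F}$ (using verifiability and collision resistance of $H$), whereupon the client reports the error and a fresh retrieval miner is chosen. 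In both sub-cases the protocol proceeds to a new, independent try; moreover a reported faulty miner may be excluded from subsequent samples, which only increases the honest fraction, so the per-try success probability stays at least $\frac{2}{3}$ throughout.

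I would then finish with the geometric-distribution bookkeeping: the tries form a sequence of independent Bernoulli trials, each succeeding with probability $p \ge \frac{2}{3}$, so the number of tries until the first success is stochastically dominated by a geometric random variable with mean $1/p \le \frac{3}{2} = O(1)$. Hence the retrieval process almost surely terminates, and does so after an expected constant number of tries, which establishes retrievability.

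The main obstacle I anticipate is making the notion of a ``try completing within the timeout'' rigorous under the partially synchronous model: before GST, messages between honest nodes may be delayed beyond $\Delta$, so an honest retrieval miner could miss the timeout and be wrongly flagged, breaking the clean dichotomy. I would handle this by invoking partial synchrony --- after GST every honest retrieval miner delivers $\mathcal{F}$ well within a timeout chosen as a suitable multiple of $\Delta$ (covering the broadcast of the request, the storage miners' responses, and WTS verification), and only finitely many pre-GST tries can be wasted --- together with Theorem~2, whose safety and liveness guarantees ensure that every node observes the same committed weight table and sector set, so that the weighted sampling and the $f+1$ and $n-f$ thresholds used in verification are evaluated consistently. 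A secondary point to pin down is that the randomness used to pick each successive retrieval miner is fresh and unpredictable (e.g., derived from the latest committed block hash), so that the trials are genuinely independent.
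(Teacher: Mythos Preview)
Your proposal is correct and follows essentially the same route as the paper: invoke Lemma~\ref{lemma:honest_retrieval_miner} for the honest case, use the storage-weighted sampling bound $\mathsf{Pr}_{\mathsf{Byzantine}} < \tfrac{1}{3}$, and conclude via a geometric argument that the expected number of tries is at most $\tfrac{3}{2}=O(1)$. Your treatment is in fact more careful than the paper's own proof, which does not explicitly address the pre-GST timeout issue or the freshness of the sampling randomness; those additional remarks are welcome but not required to match the paper.
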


\begin{proof}
According to Lemma \ref{lemma:honest_retrieval_miner}, an honest retrieval miner can always successfully output $\mathcal{F}$. If the retrieval miner is faulty and refuses to provide the correct file, a timeout will be triggered. In this case, the client can switch to another retrieval miner. Since the retrieval miner is randomly chosen based on the number of sectors it controls, and according to our assumption that the total number of sectors controlled by Byzantine miners is at most $\lfloor\frac{n-1}{3}\rfloor$ out of $n$, the probability that the chosen retrieval miner appears to be Byzantine, denoted as $\mathsf{Pr_{\mathsf{Byzantine}}}$, is less than $1/3$. Therefore, the expected number of tries needed to find an honest retrieval miner is
\[\frac{1}{1-\mathsf{Pr_{\mathsf{Byzantine}}}} < \frac{1}{1-\frac{1}{3}} = \frac{3}{2} = O(1).\]
\end{proof}




\section{Experiments}
\label{sec:exp}

To evaluate the performance of BFT-DSN, we conducted real experiments. Specifically, we implemented BFT-DSN based on the description in Section~\ref{sec:design}, and tested put and get operations.

\subsection{Implementation}

Fig.~\ref{fig:diagram} shows the architecture of our BFT-DSN implementation. Throughout the life cycle of a file, its chunks are transmitted from the encoding miner to the storage miners, and then to the retrieval miner. The encoding miner generates the chunks, while the storage miners generate homomorphic fingerprints for these chunks and create weighted threshold signatures based on the homomorphic fingerprints. The storage miners also produce PoSs, which are utilized to validate the weights of miners in SW-BFT consensus. The retrieval miner collects chunks from storage miners to recover files.
We built the encoding miner module, retrieval miner module, HF module and PoS module from scratch, marked green in Fig.~\ref{fig:diagram}. The storage miner module is inherited from Filecoin Lotus, the most popular implementation of Filecoin, which is the biggest DSN\footnote{https://spec.filecoin.io/implementations}. The storage miner module is colored blue in Fig.~\ref{fig:diagram}.
We use Tendermint Core, an open-source BFT consensus engine, as the consensus engine of BFT-DSN, colored orange in Fig.~\ref{fig:diagram}. BFT-DSN empploys  the storage-weighted BFT consensus, SW-BFT, which adjusts the voting power of each miner based on the amount of storage resources it contributes to the network. To achieve this, we examined the source code of Lotus and found the function used to update the storage power table, $\mathsf{UpdatePower}$. We added some code at the entry of $\mathsf{UpdatePower}$ to call $\mathsf{ValidatorUpdate}$, the function used by Tendermint Core, to adjust each miner's voting power. 
Additionally, the RS coding library is provided by Klaus Post\footnote{https://github.com/klauspost/reedsolomon}, implemented over a Galois Field with order 256. The weighted threshold signature library is given in \cite{weightedTS} \footnote{https://github.com/sourav1547/wts}. We also implemented the homomorphic fingerprinting algorithm \cite{verifiableEC} based on the RS coding library.

\begin{figure}[htbp]
	\centering\includegraphics[width=0.48\textwidth]{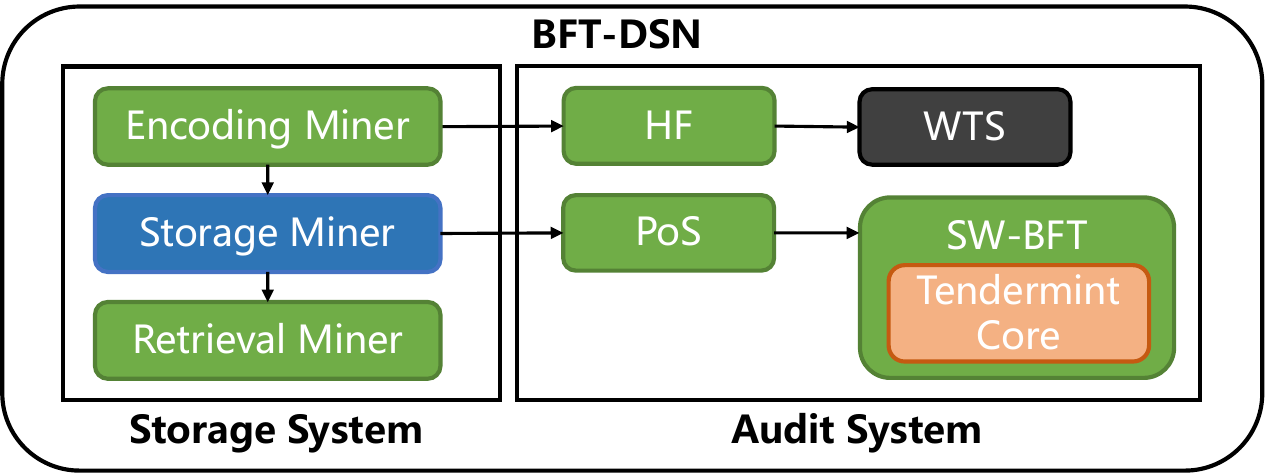}
	\caption{The Block diagram of BFT-DSN}
	\label{fig:diagram}
\end{figure}

\subsection{Experiment Setup}

In our evaluation, we tested BFT-DSN using networks of varying sizes, ranging from 10 sectors to 100 sectors for 2 to 100 nodes. To construct a BFT-DSN network with $n$ sectors, we had $n$ nodes pledge sectors following a Poisson process until a total of $n$ sectors were pledged. Nodes that did not pledge any sector were subsequently removed.
Each node was equipped with four CPU cores operating at a speed of 2.10 GHz, 8 GB of RAM, and 1 TB of storage capacity. The network had a data transfer rate of 1 GBps. As mentioned in Section~\ref{sec:design}, sectors in BFT-DSN have equal sizes. In our evaluation, this size was set to 32 GB to align with Filecoin's common deployment.
To assess the system, we generated random files of different sizes, ranging from 10 MB to 200 MB, using the /dev/urandom file in Linux.

To demonstrate that BFT-DSN achieves optimal Byzantine resilience without sacrificing performance, we focus on four crucial metrics: Byzantine resiliency, storage costs, put latency, and get latency. Evaluation results were averaged over 100 trials. For evaluation purposes, we compared BFT-DSN with three widely-used DSNs \cite{dsnsurvey}, namely Filecoin, Sia, and Storj mentioned in Section~\ref{sec:related}, which served as the baselines. Specifically, since Sia and Storj use EC for storage redundancy, we set their EC scheme to be the same as BFT-DSN, which is $(n-f, f)$-RS. It is important to note that Sia and Storj do not allow storing more than one chunk of a file in a single miner. Therefore, we distribute each of the $n$ chunks encoded from a file to $n$ miners, with each chunk assigned to a different miner. 
Besides, Filecoin achieves duplicated storage by generating copies. Thus, to tolerate $f$ Byzantine faults from $n$ sectors, we let Filecoin store $f+1$ copies of each file.

\subsection{Evaluation Results}

\textbf{Byzantine Resiliency.}
First, we evaluate the Byzantine resilience of our BFT-DSN by measuring the success rate of the retrieval process.  We simulate Byzantine adversaries that actively attack the storage service and consensus mechanism. As shown in Figure~\ref{fig:availability}, when less than $\frac{1}{3}$ of the storage resources are controlled by Byzantine adversaries, the retrieval process of a file in BFT-DSN always succeeds. This is because both EC-based storage and SW-BFT consensus of BFT-DSN tolerate up to $\frac{1}{3}$ proportion of Byzantine sectors, and verifiability throughout a file's lifecycle is guaranteed by HF and WTS. Filecoin, which uses storage redundancy based on replication, achieves a near 100\% success rate when less than 20\% of storage resources being controlled by Byzantine adversaries. Note that the success rate is not 100\%  because there exist cases when all replicas of a file happen to be stored by Byzantine miners (the file would become unavailable). When 20\% or more storage resources are controlled by Byzantine adversaries, consensus fails \cite{ECConsensusAnal}, and the Byzantine adversaries are able to block the retrieval requests sent by the client and completely obstruct retrieval. In Sia and Storj, redundant storage is based on erasure coding, but the integrity of the collected chunks for decoding is not verified. This allows Byzantine miners to provide modified chunks to prevent the decoder from decoding the requested file. Only when all the first $n-f$ arriving chunks are sent by honest miners (the probability of which decreases rapidly as the proportion of Byzantine storage resources increases) would the correct file be output.

\begin{figure}[htbp]
    \centering
    \includegraphics[width=0.48\textwidth]{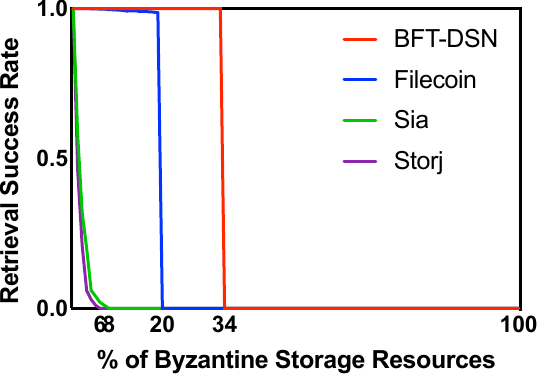}
    \caption{File availability with different proportion of storage resources controlled by Byzantine adversaries}
    \label{fig:availability}
\end{figure}

\textbf{Storage Cost.}
Next, we evaluate the storage cost of BFT-DSN. As shown in Figure~\ref{fig:storage}, when $n=40$, the storage costs grow linearly with the increase of file size in all four DSNs\footnote{We varied $n$ in our experiments and obtained results with the same trend. Therefore in this section, we only report the results when $n=40$.}. However, Filecoin's storage costs grow much faster than the other three DSNs that use EC. For a 200 MB file, Filecoin requires more than 3.5 GB of storage, while the other DSNs, including BFT-DSN, require less than 0.5 GB. To determine the average storage costs for files, we apply linear regression to storage costs for files ranging from 10 MB to 200 MB. When $n=40$, the $f$-replica storage costs an average of 20.63 MB of space per MB of file, while BFT-DSN, Sia, and Storj cost 1.49 MB, 1.55 MB, and 1.48 MB of space respectively. This is because EC provides storage redundancy with a significantly lower cost compared to directly making copies. The average storage costs per MB of Sia and Storj are close to BFT-DSN, which are 1.55 MB and 1.48 MB, respectively. The storage cost of Filecoin step-increases around a linear function of file size due to Filecoin's file padding mechanism. Each file stored in Filecoin is padded to the nearest power of 2 in bytes before undergoing further processing.

\begin{figure}[htbp]
	\centering
		\includegraphics[width=0.48\textwidth]{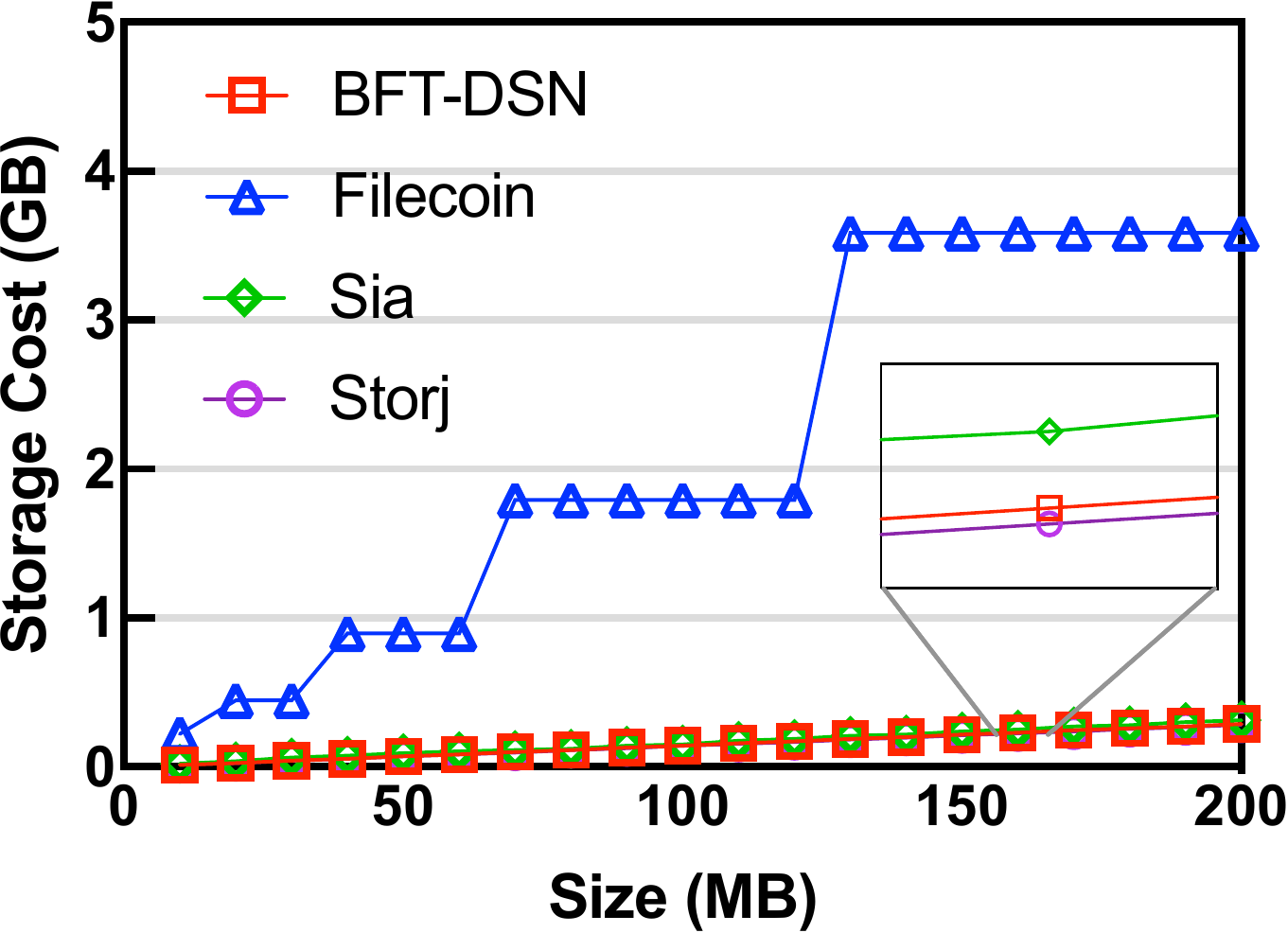}
	\caption{Storage costs, $n$=40}
	\label{fig:storage}
\end{figure}


\begin{figure*}[!hbtp]
	\centering
     \begin{subfigure}{6cm}
			\includegraphics[width=\textwidth]{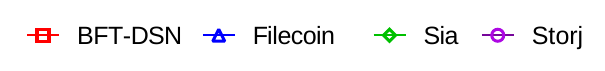}
	\end{subfigure}
	\qquad
	\\
	\begin{subfigure}{3.98cm}
		\includegraphics[width=\textwidth]{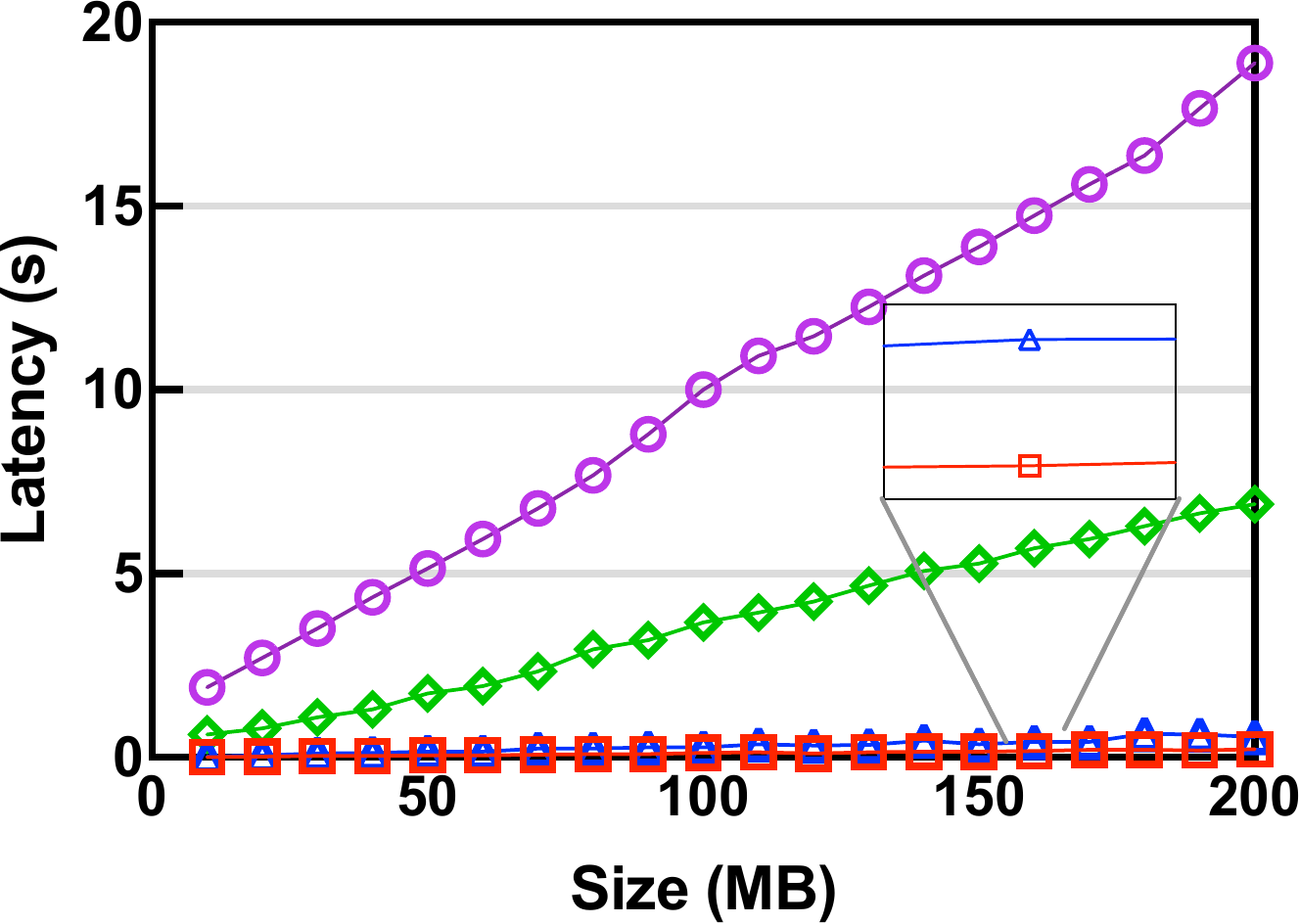}
		\caption{Put latency with different file size, $n=10$}
		\label{fig:put_latency_n10}
	\end{subfigure}
	\qquad
	\begin{subfigure}{3.98cm}
		\includegraphics[width=\textwidth]{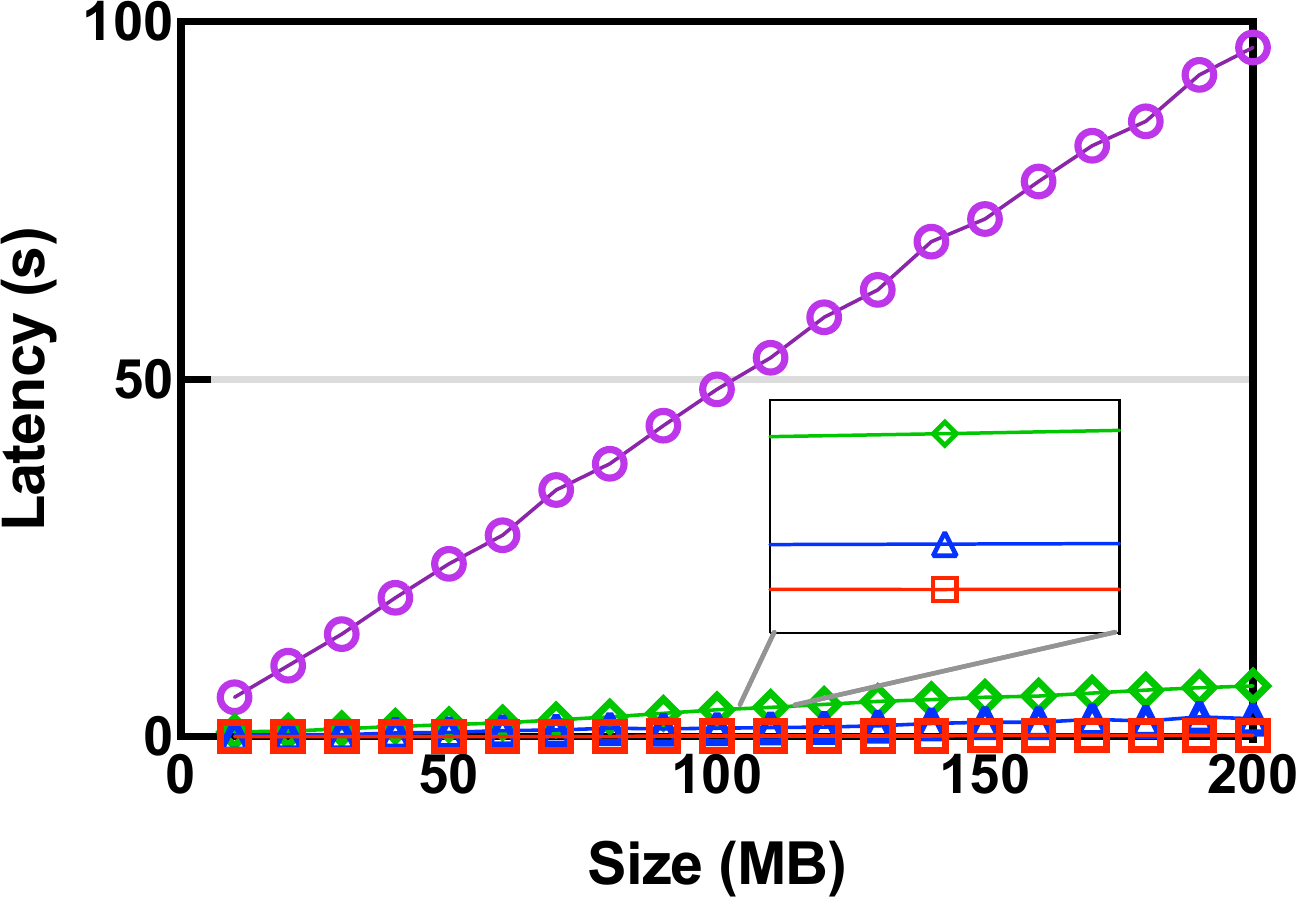}
		\caption{Put latency with different file size, $n=40$}
		\label{fig:put_latency_n40}
	\end{subfigure}
        \qquad
	\begin{subfigure}{3.98cm}
		\includegraphics[width=\textwidth]{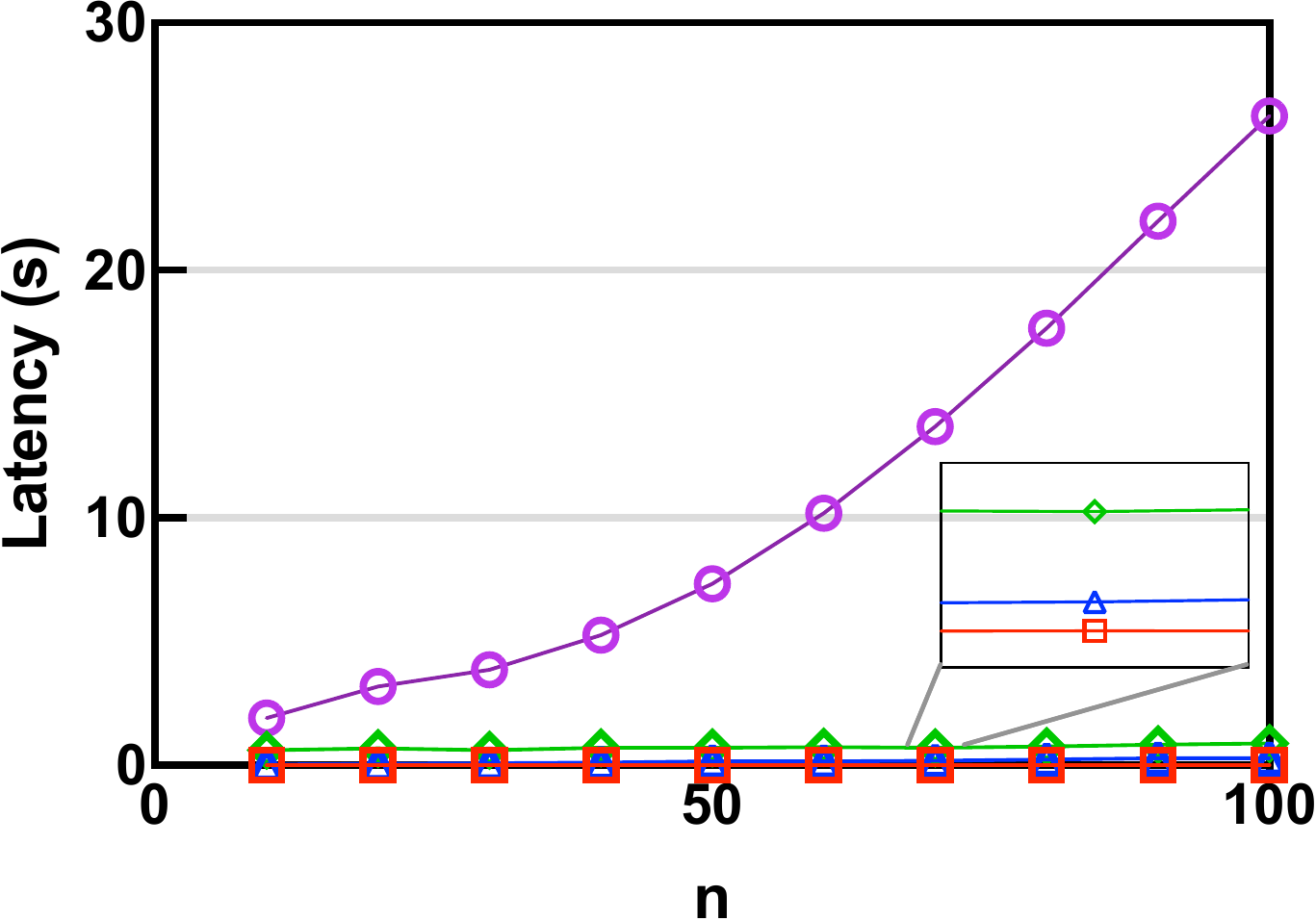}
		\caption{Put latency with different $n$, $|\mathcal{F}|=$10~MB}
		\label{fig:put_latency_size10}
	\end{subfigure}
        \qquad
	\begin{subfigure}{3.98cm}
		\includegraphics[width=\textwidth]{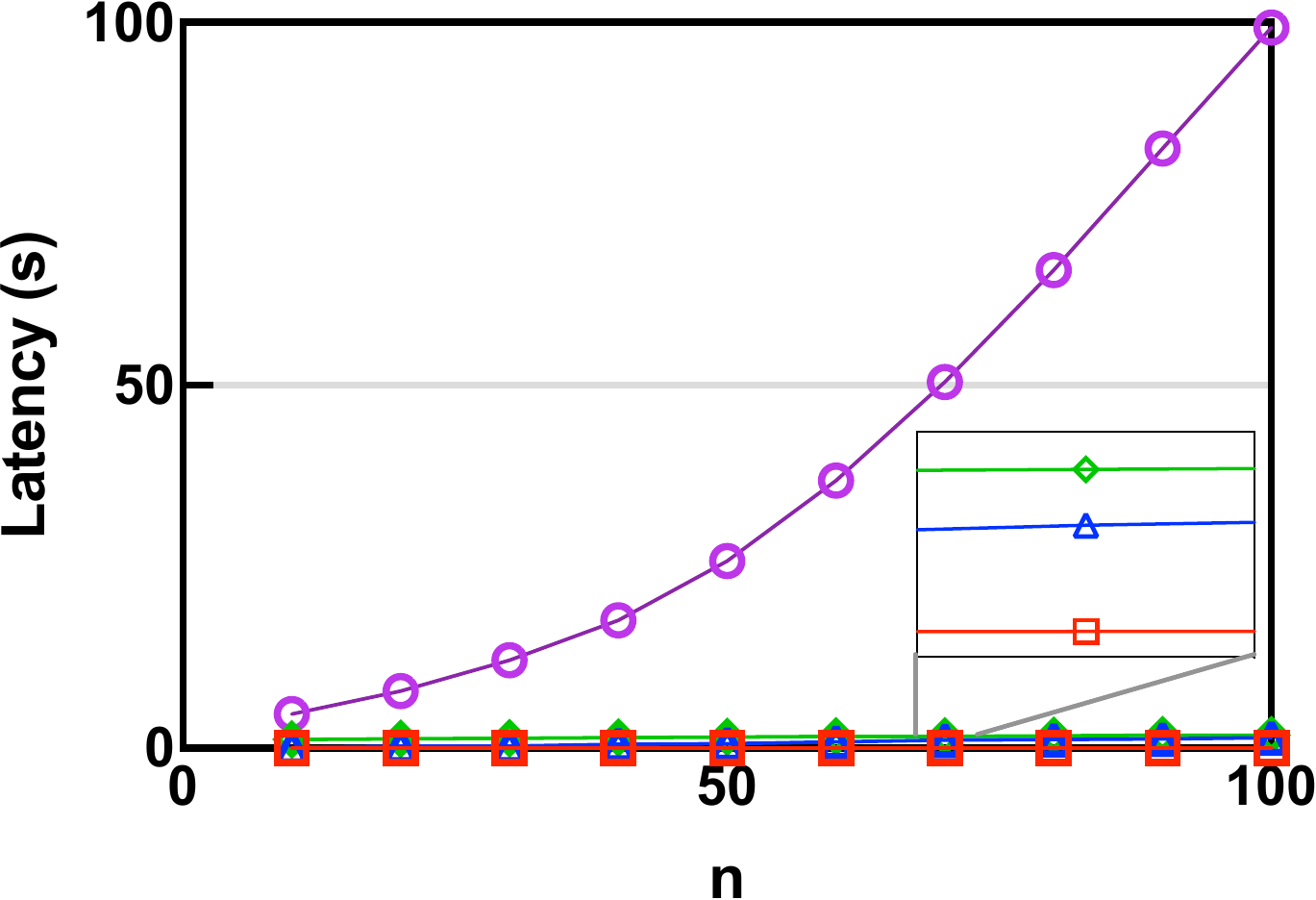}
		\caption{Put latency with different $n$, $|\mathcal{F}|=$40~MB}
		\label{fig:put_latency_size40}
	\end{subfigure}
	\caption{Put latency}
	\label{fig:put_latency}
\end{figure*}

\begin{figure*}[!htbp]
	\centering
        \begin{subfigure}{6cm}
			\includegraphics[width=\textwidth]{images/legend.pdf}
	\end{subfigure}
	\qquad
	\\
	\begin{subfigure}{3.98cm}
		\includegraphics[width=\textwidth]{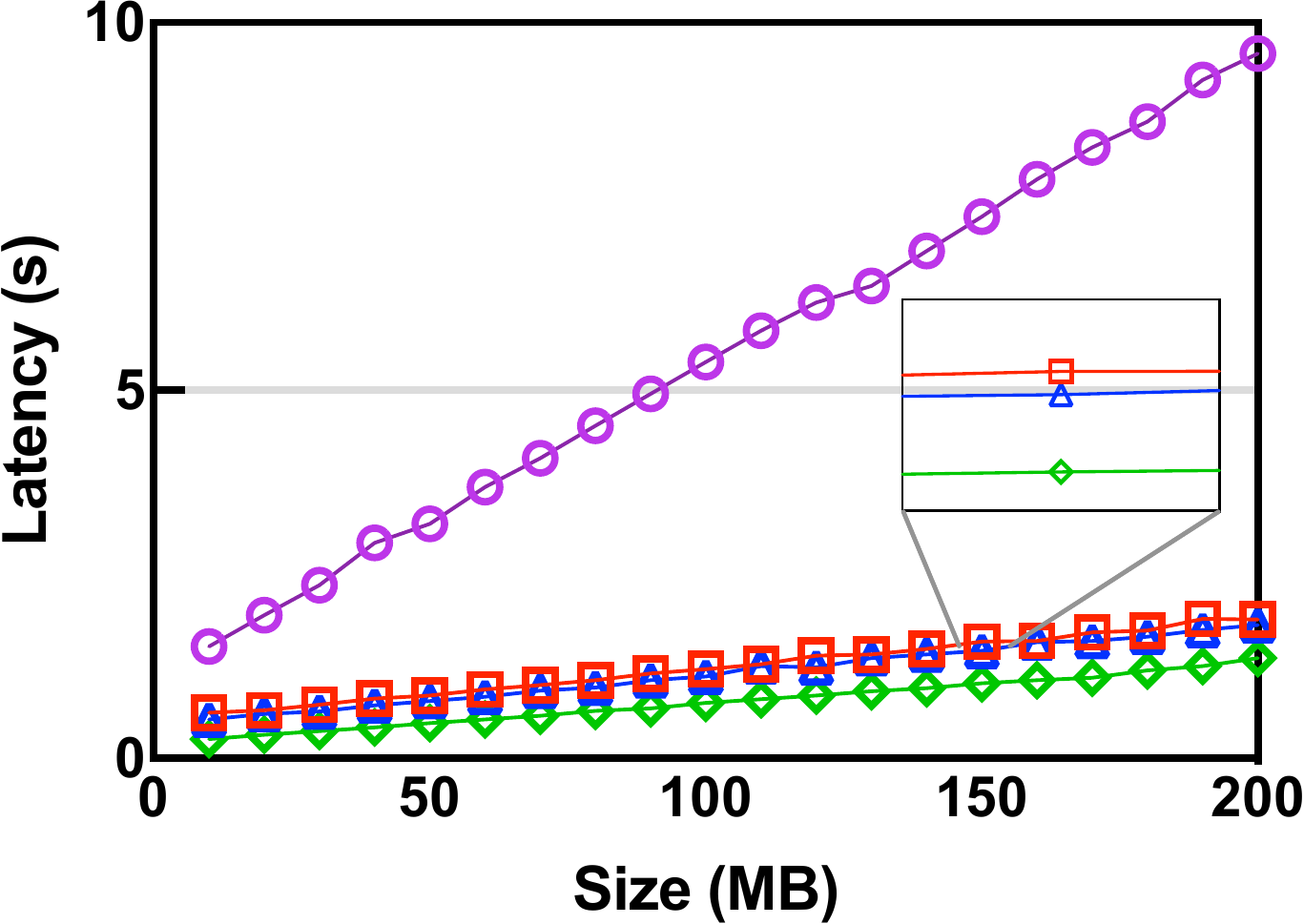}
		\caption{Get latency with different file size, $n=10$}
		\label{fig:get_latency_n10}
	\end{subfigure}
	\qquad
	\begin{subfigure}{3.98cm}
		\includegraphics[width=\textwidth]{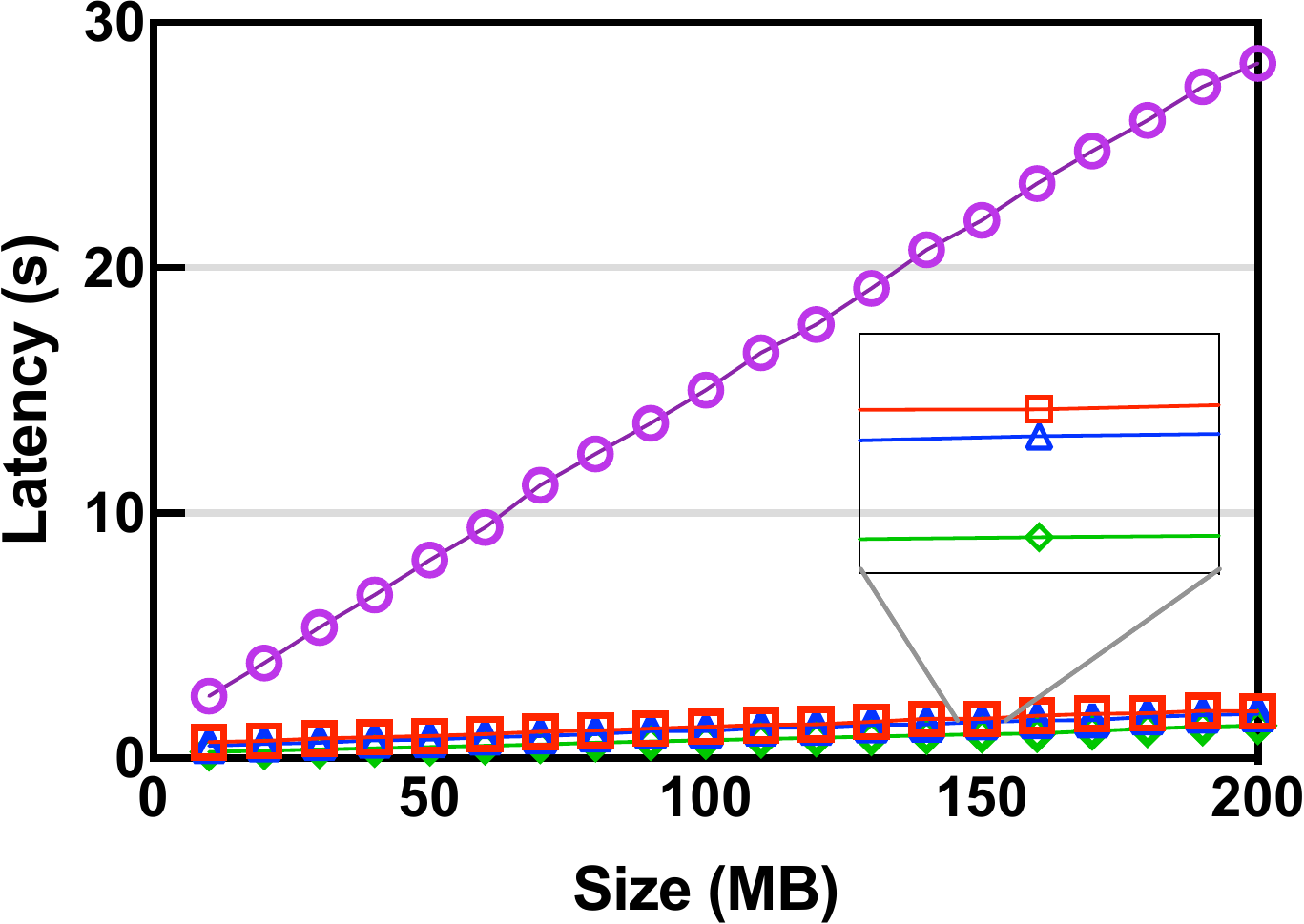}
		\caption{Get latency with different file size, $n=40$}
		\label{fig:get_latency_n40}
	\end{subfigure}
        \qquad
	\begin{subfigure}{3.98cm}
		\includegraphics[width=\textwidth]{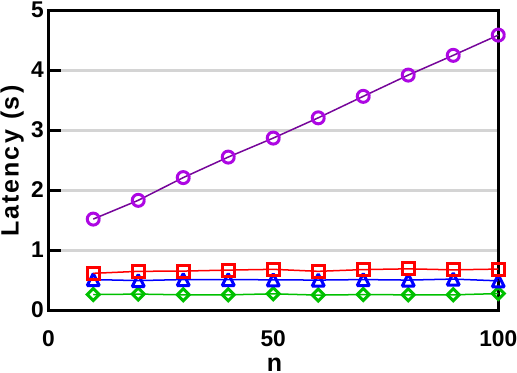}
		\caption{Get latency with different $n$, $|\mathcal{F}|=$10~MB}
		\label{fig:get_latency_size10}
	\end{subfigure}
        \qquad
	\begin{subfigure}{3.98cm}
		\includegraphics[width=\textwidth]{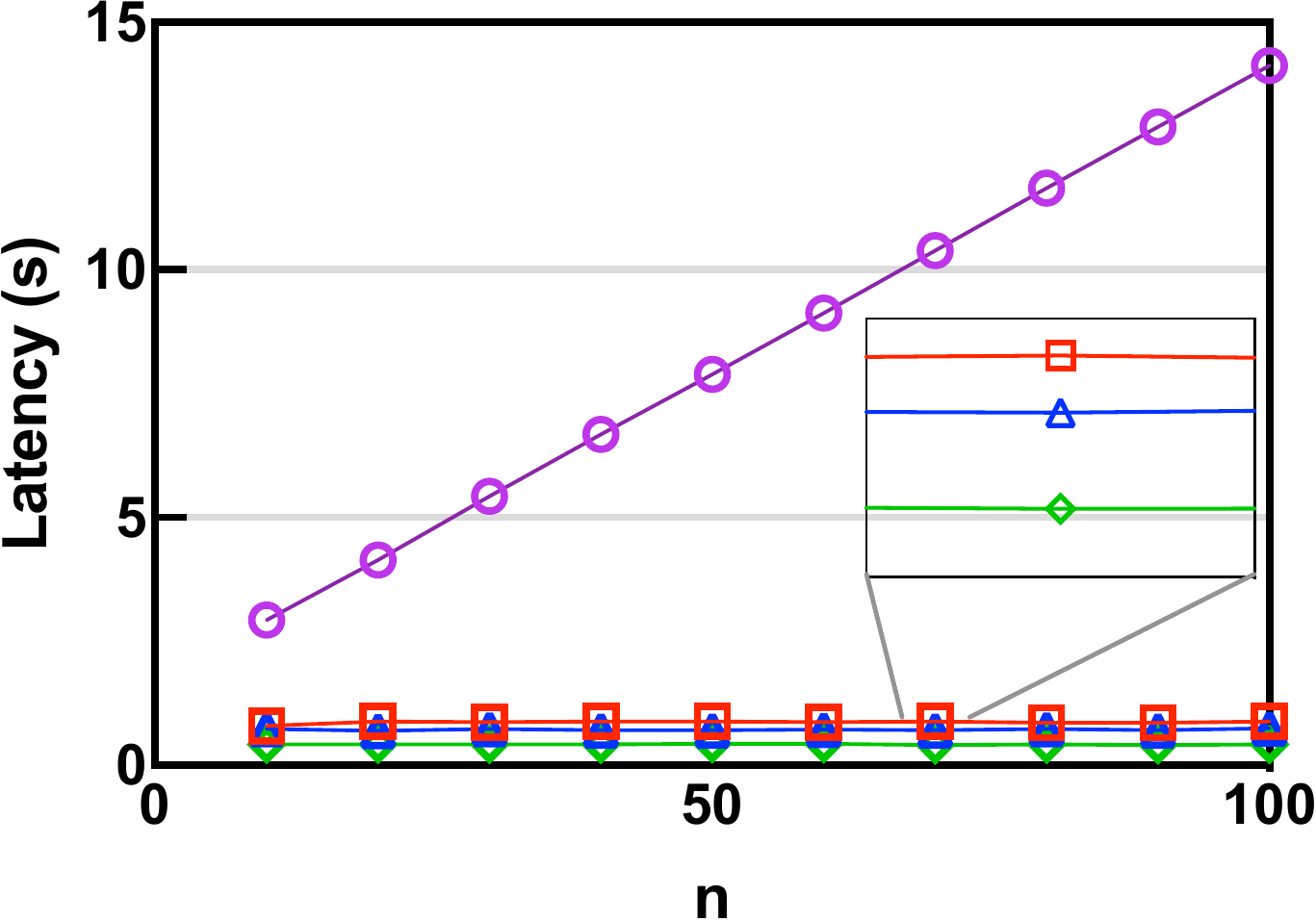}
		\caption{Get latency with different $n$, $|\mathcal{F}|=$40~MB}
		\label{fig:get_latency_size40}
	\end{subfigure}
	\caption{Get latency}
	\label{fig:get_latency}
\end{figure*}

\textbf{Put Latency.} Then, we evaluate the put latency of BFT-DSN. In our experiments, the measurement started from when sending the storage request and ended when all $n$ chunks were stored for BFT-DSN, Sia, and Storj. For Filecoin, the measurement started from when sending the storage request and ended when $f+1$ copies of the file were stored. The results are displayed in Fig.~\ref{fig:put_latency}.
First, we investigate how file size affects the put latency of BFT-DSN and the baselines in two different settings: one with $n$ set to 10, and another with $n$ set to 40. As shown in Fig.~\ref{fig:put_latency_n10} and Fig.~\ref{fig:put_latency_n40}, the put latencies of all four DSNs increase linearly with file size. Due to extra padding and encryption processes, the put latencies of Storj and Sia grow faster than the other two. Additionally, the put latency of Storj grows even faster than that of Sia due to the heavy load of scheduling. BFT-DSN's put latency leads by a slim margin compared to Filecoin. 
Then, we examine how the value of $n$ affects the put latencies. The results are shown in Fig.~\ref{fig:put_latency_size10} and Fig.~\ref{fig:put_latency_size40} for file sizes of 10 MB and 40 MB, respectively. For BFT-DSN, Filecoin, and Sia, the increase in put latency due to the increase in $n$ is negligible. However, for Storj, the put latency grows in a super-linear manner. This is because Storj uses a satellite node to schedule the upload of a file. During the upload process, the satellite is responsible for obtaining a list of $n$ storage miners. To obtain this list, the satellite checks the authorization of each of the $n$ miners by reading information from a database. The time complexity of reading information from a database of $n$ miners is $O(\log n)$. Therefore, the time complexity of the Storj satellite obtaining the list of $n$ storage miners is at least $O(n\log n)$, which is super-linear.

\textbf{Get Latency.}
In our experiments, we measured the latency of the get operation from the moment the retrieval request was sent until the requested file was recovered. The evaluation results are shown in Fig.~\ref{fig:get_latency}.
As seen in Fig.~\ref{fig:get_latency_n10} and Fig.~\ref{fig:get_latency_n40}, both the get latency of BFT-DSN and that of $f$-replica Filecoin increase linearly with the file size $|\mathcal{F}|$. This is because the transmission time of both storage methods grows linearly w.r.t. the file size. BFT-DSN's get latency is similar to that of Filecoin and Sia, while Storj's get latency increases much faster due to chunk padding.
To examine how the network size affects the get latency of BFT-DSN and the baselines, we evaluate their put latency with fixed file sizes and vary $n$, as shown in Fig.~\ref{fig:get_latency_size10} and Fig.~\ref{fig:get_latency_size40}. With fixed file sizes, the get latencies of BFT-DSN, Filecoin, and Sia remain constant as $n$ increases, while Storj's get latency grows linearly with $n$ due to extra scheduling process.

\section{Conclusions}\label{sec:conclusion}
In this paper, we present the design and implementation of BFT-DSN, a DSN that achieves optimal Byzantine resilience. BFT-DSN combines publicly verifiable EC-based storage with storage-weighted BFT consensus, unifying their security upper bound to $\lfloor\frac{n-1}{3}\rfloor$, which is optimal. Moreover, BFT-DSN enhances the decentralized verification of erasure coding through the adoption of homomorphic fingerprints and weighted threshold signatures. We have implemented a working instance of BFT-DSN and evaluated its performance. The results show that BFT-DSN delivers performance comparable to the state-of-the-art industrial DSNs in terms of storage cost and latency while offering better Byzantine fault tolerance. In conclusion, BFT-DSN achieves optimal Byzantine resilience while maintaining performance comparable to existing industrial DSNs.
One limitation of this work is that we did not apply erasure coding to all data in the DSN, such as blockchain and state data. In future research, we will explore ways to optimize storage costs in DSNs, allowing DSN technologies to be applied in a wider range of scenarios.

\section{Acknowledgement}
This study was partially supported by the National Key R\&D Program of China (No.2022YFB4501000), the National Natural Science Foundation of China (No.62232010, 62302266, U23A20302), Shandong Science Fund for Excellent Young Scholars (No.2023HWYQ-008), Shandong Science Fund for Key Fundamental Research Project (ZR2022ZD02), and the Fundamental Research Funds for the Central Universities.
	
	\ifCLASSOPTIONcaptionsoff
	\newpage
	\fi
	
	\bibliographystyle{IEEEtran}
	\bibliography{bftdsn}
	
\end{document}